\theoremstyle{plain}
\newtheorem{theorem}{Theorem}[section]
\newtheorem{proposition}[theorem]{Proposition}
\newtheorem{definition}[theorem]{Definition}
\newcolumntype{L}{>{\raggedright\arraybackslash}X}
\renewenvironment{itemize}[1]{\begin{compactitem}#1}{\end{compactitem}}
\newcommand{\bctm}[2]{\big( \varthetavec, \allowbreak {#1},\allowbreak {#2}, \allowbreak\pi_\vartheta(\varthetavec)\big)}
\newcommand{\MC}[1]{\textbf{Manu: \textcolor{blue}{#1}}}
\def \dsR {\text{$\mathds{R}$}}
\def \avec {\text{\boldmath$a$}}    
\def \bvec {\text{\boldmath$b$}}    
\def \cvec {\text{\boldmath$c$}}    \def \mC {\text{\boldmath$C$}}
    \def \mD {\text{\boldmath$D$}}
    \def \mI {\text{\boldmath$I$}}
    \def \mK {\text{\boldmath$K$}}
\def \xvec {\text{\boldmath$x$}}    \def \mX {\text{\boldmath$X$}}
\def \yvec {\text{\boldmath$y$}}
\def \betavec         {\text{\boldmath$\beta$}}
\def \gammavec        {\text{\boldmath$\gamma$}}
\def \varthetavec     {\text{\boldmath$\vartheta$}}
\def \tauvec          {\text{\boldmath$\tau$}}
\def \omegavec        {\text{\boldmath$\omega$}}
\def \betatildevec         {\text{\boldmath$\tilde \beta$}}
\def \mSigma   {\mathbf{\Sigma}}
\newcommand{\blind}{1}
\begin{document}

\def\spacingset#1{\renewcommand{\baselinestretch}%
{#1}\small\normalsize} \spacingset{1}

\if1\blind
{
  \title{\bf Bayesian Conditional Transformation Models}
  \author{Manuel Carlan\thanks{
    The work of Manuel Carlan was supported by DFG via the research training group 1644.} ,\hspace{.2cm} Thomas Kneib\thanks{
    Thomas Kneib received financial support from the DFG within the research project KN 922/9-1.}\\
    Chair of Statistics, University of G\"{o}ttingen\\
    and\\
    Nadja Klein\thanks{\protect\linespread{1}\protect\selectfont
    Nadja Klein gratefully acknowledges funding from the DFG through the Emmy Noether grant\newline\hspace*{1.7em}KL 3037/1-1. The authors would like to thank the Editor, Associate Editor and two referees for\newline\hspace*{1.7em}many valuable comments that lead
to a significant improvement of our original submission.\newline\hspace*{1.7em}nadja.klein@hu-berlin.de}\\
    Chair of Statistics and Data Science, Humboldt-Universit\"{a}t zu Berlin
    }
      \date{}
  \maketitle
  \thispagestyle{empty}
} \fi

\if0\blind
{
  \bigskip
  \bigskip
  \bigskip
  \begin{center}
    {\LARGE\bf Bayesian Conditional Transformation Models}
\end{center}
  \medskip
} \fi
\begin{abstract}
\noindent Recent developments in statistical regression methodology shift away from pure mean regression towards distributional regression models. One important strand thereof is that of conditional transformation models (CTMs). CTMs infer the entire conditional distribution directly by applying a transformation function to the response conditionally on a set of covariates towards a simple log-concave reference distribution. Thereby, CTMs allow not only variance, kurtosis or skewness but the complete conditional distribution to depend on the explanatory variables. We propose a Bayesian notion of conditional transformation models (BCTMs) focusing on exactly observed continuous responses, but also incorporating extensions to randomly censored and discrete responses. Rather than relying on Bernstein polynomials that have been considered in likelihood-based CTMs, we implement a spline-based parametrization for monotonic effects that are supplemented with smoothness priors. Furthermore, we are able to benefit from the Bayesian paradigm via easily obtainable credible intervals and other quantities without relying on large sample approximations. A simulation study demonstrates the competitiveness of our approach against its likelihood-based counterpart but also Bayesian additive models of location, scale and shape and Bayesian quantile regression. Two applications illustrate the versatility of BCTMs in problems involving real world data, again including the comparison with various types of competitors.

\vspace{0.5cm}

\noindent\textit{Keywords: Conditional distribution function; distributional regression; Hamiltonian Monte Carlo; monotonicity constraint; penalized splines; No-U-Turn Sampler.}
\end{abstract}

\clearpage
\pagenumbering{arabic} \setcounter{page}{1}

\spacingset{1.9}
\setlength{\abovedisplayskip}{0.2cm}
\setlength{\belowdisplayskip}{0.2cm}
\section{Introduction}\label{sec:intro}
Regression is omnipresent in many statistical applications and an ongoing field in recent research on statistical methods. While, in principle, interest always lies in describing the conditional distribution $\mathbb{P}_{Y|\mX=\xvec}$ of a response $Y$ given a set of explanatory variables $\mX$ with observed realisations $\xvec$, most traditional approaches target the conditional expectation $\mathbb{E}(Y|\mX = \xvec)$ as the only characteristic of interest \citep[e.g.~generalized linear or additive models;][]{nelder1972generalized,hastie1990generalized}.
One way to abolish this often unwarranted simplification are generalized additive models for location scale and shape \citep[GAMLSS;][]{rigby2005}  allowing for flexible relationships between the covariates and all parameters of the response distribution via flexible additive predictors. In this framework, the researcher can select from a diverse set of parametric distributions for discrete, continuous, mixed and multivariate response distributions \citep{klein2015bayesian,KleKneKlaLan2015}. However, deciding on a parametric response distribution can quickly become a burden as it imposes strong assumptions on the model if not done with great care. One approach that does not entail a fixed parametric form is quantile regression introduced by \cite{KoeBas1978}. Since a distribution is satisfyingly captured by a dense grid of quantiles, each of these quantiles is modelled linearly or additively through covariates \citep{horowitz2005nonparametric}. Bayesian versions were developed e.g.~by \cite{yu2001bayesian,waldmann2013bayesian}.

In contrast to all approaches mentioned so far, transformation models aim to infer the conditional distribution function directly. In an attempt to draw a bigger picture, we recapitulate a brief history of transformation models, while slightly focusing on Bayesian implementations. For a tour de force that concentrates on the frequentist perspective, see e.g.~\cite{hothorn2014,most2015conditional,hothorn2018}.
Every transformation model entails a monotonically increasing transformation function $h$ that acts on the response and is designed to reframe an unknown distribution $\mathbb{P}(Y \leq y)$ in terms of the transformation, i.e.~$\mathbb{P}(h(Y) \leq h(y))$. The advent of parametric transformation models goes back to the Box-Cox model \citep{box1964analysis} which ignited an area of active research that is still lit to this day.  One approach that avoids strong assumptions on the parametric form of the transformation function was introduced by \cite{cheng1995analysis}. It inspired plenty of models that share the estimation of a linear transformation function $h(y|\xvec) = h_Y(y)- \xvec^\top\betavec$ where the baseline transformation $h_Y(y)$ is estimated semiparametrically in conjunction with a linear, covariate-dependent shift $\xvec^\top\betavec$. Prominent representatives are the proportional odds or the proportional hazards model. One of the first Bayesian transformation models was proposed by \cite{pericchi1981bayesian}. \cite{mallick2003bayesian} model  the transformation function $h$ semiparametrically using (Bayesian) Bernstein polynomials and P\'{o}lya trees for the estimation of accelerated failure time (AFT) models among others.  \cite{song2012semiparametric} use  Bayesian P-splines \citep{lang2004bayesian} for transformation models with additive shift effects and a Gaussian reference distribution. Besides continuous responses, \cite{james2021bayesian} allow for discrete ordered and mixed discrete/continuous outcomes in conjunction with linear covariate effects.

Although very powerful in a lot of applications, transformation models of this type are considerably hindered by the additivity assumption on the scale of the transformation function $h(y|\xvec) = h_Y(y) + h_{\xvec}(\xvec)$ where the explanatory variable can only contribute a shift of the baseline transformation $h_Y$ and can therefore influence the conditional location parameter only.  One modern example that includes linear interactions of covariates and gained a lot of attention is distribution regression \citep{chernozhukov2013inference}. Here, in the context of counterfactuals, the conditional transformation function $h(y|\xvec)=h_Y(y) - \xvec^\top \betavec(y)$ is supplemented with varying-coefficient type interactions $\xvec^\top \betavec(y)$ where the varying coefficients $\betavec(y)$ are estimated on basis of $\mathbb{P}(Y \leq y | \mX=\xvec)=\mathbb{E}(\mathbbm{1}(Y\leq y) |\mX=\xvec)$. This connection allows to account for heteroskedasticity or other patterns that vary with the covariates.
An even more flexible variant comes with conditional transformation models (CTMs) as introduced by \cite{hothorn2014} which share the same goal and aim to obtain an estimator for the whole conditional  distribution function.

In this article, we propose the class of Bayesian conditional transformation models (BCTMs). BCTMs can be understood as a Bayesian interpretation of the likelihood-based CTMs of \cite{hothorn2018} via the most likely transformation (MLT) model. All three models have in common that they target the direct estimation of the distribution function of a response $Y$ conditional on a set of covariates $\mX =\xvec$ by means of estimating the conditional transformation function $h(y|\xvec)$. Yet, Bayesian inference based on Markov chain Monte Carlo (MCMC) simulations additionally allows us to obtain exact inferences on all quantities of interest without relying on large sample approximations or bootstrap procedures. This can be of particular value in scenarios with smaller samples where the parameters themselves are of secondary interest compared to complex transformations thereof. As the MLT model, the BCTM can be applied to discrete and continuous responses in the presence of random censoring, but additionally includes smoothness penalties for high-dimensional effects induced by the prior supporting stable function estimates. Bayesian principles in connection with the modularity of  implementation make it straightforward to expand BTCMs towards more complex prior structures enabling effect selection or different shrinkage properties for example. The idea of using monotonic P-splines for parametrizing transformation functions has been explored before \citep[see e.g.,][]{song2012semiparametric,tang2018semiparametric}, but our approach is innovative in a Bayesian setting with higher-dimensional interactions involving $y$, where curvatures are often complex and where  control over the penalization mechanism can  contribute to a better understanding of the model.
To summarize,  we
\begin{compactitem}
\item introduce BCTMs as a new model class,
\item apply a B-spline basis in conjunction with reparametrized basis coefficients to impose monotonicity  on the conditional transformation function in the $y$ direction (opposed to the MLT model which relies on simple Bernstein polynomials),
\item supplement the unreparameterized vector of basis coefficients with a partially improper multivariate Gaussian prior that enforces smoothness towards a straight line both for monotonic and for unrestricted nonlinear effects,
\item develop Bayesian posterior estimation based on Hamiltonian Monte Carlo \citep[HMC;][]{neal2011mcmc,betancourt2017conceptual} using the highly-efficient No-U-Turn Sampler \citep[NUTS,][]{hoffman2014no} for the vector of basis coefficients,
\item implement Bayesian model selection,
\item evaluate the distribution recovery ability and  validity of credible intervals for BCTMs in different simulations and compare them to its main competitors, and
\item demonstrate different aspects and the practical relevance of BCTMs in applications on cholesterol levels from the Framingham heart study and on leukemia survival times.
\end{compactitem}
The rest of the paper is structured as follows: Sec.~\ref{sec:CTM} introduces BCTMs as a model class consisting of several building blocks including prior assumptions and theoretical properties. Sec.~\ref{sec:mcmc} describes posterior estimation including Bayesian model selection. Sections~\ref{sec:apps} and \ref{sec:sims}  contain simulations and applications, respectively. Sec.~\ref{sec:summary} provides a brief review of our findings and proposes several directions for future research. The Supplement contains proofs of theoretical results as well as additional results for the simulations, an additional application on lung cancer survial times from the Veteran's Administration Lung Cancer Trial and further details.

\section{Bayesian Conditional Transformation Models}\label{sec:CTM}

In a CTM, the cumulative distribution function (CDF) of a response $Y\in\mathcal{S}\subset\dsR$ conditional on a set of covariates $\mX$ is specified via
\begin{align}
F_{Y|\mX={\xvec}}(y)=P(Y \leq y | \mX = \xvec) &= P(h(Y|\xvec) \leq h(y|\xvec))=  F_Z(h(y|\xvec)),\label{eq:ctm}
\end{align}
where the covariate-dependent function $h(y|\xvec):\mathcal{S}\to\dsR$ is assumed to be monotonically increasing in $y$ to transform the response such that it follows a pre-specified reference distribution with continuous distribution function $F_Z : \mathbb{R} \mapsto [0,1]$. The reference distribution is independent of $\xvec$ and does not contain any unknown parameters to be estimated. 
In this way, a CTM is  characterized by the choice of the reference distribution $F_Z$ and a suitable parameterisation of the transformation function $h(y|\xvec)$ such that an estimate of the latter yields an estimate of a possibly complex conditional cumulative distribution function (cCDF) $\hat{F}_{Y|\mX}$. We will discuss both ingredients in more detail below. Naturally, distinctive characteristics of the transformation function such as monotonicity and smoothness are mirrored in $F_{Y|\mX = \xvec}$, which is why $h(y|\xvec)$ has to be modelled with great care.

Following \cite{hothorn2014}, we assume an additive decomposition on the scale of the transformation function into $J$ partial transformation functions, i.e.
\begin{equation}
\begin{aligned}
h(y|\xvec) &= \sum_{j=1}^J h_j(y|\xvec), &j=1,\ldots, J,
\end{aligned}\label{eq:add_decomp}
\end{equation}
where $h_j(y|\xvec)$, in the broadest sense, can be understood as response-covariate interactions that are monotone only in direction of $y$. To ensure identifiability, the partial transformation functions involving nonlinear terms are centered around zero, resulting in the additive decomposition $h(y|\xvec) = \beta_0 + \sum_{j=1}^J h_j(y|\xvec)$ with overall intercept $\beta_0$ of the conditional transformation function, which we will notationally suppress for most of what follows.

In light of \eqref{eq:add_decomp}, it is important to stress that additivity of the transformation function is assumed on the transformed scale, i.e.\ there is no explicit differentiation between signal and noise as in Gaussian regression models with separable error term. Hence, CTMs come with the benefit of a straightforward entry point to modelling all moments of the response distribution implicitly as functions of $\xvec$. In the realm of CTMs, the flexibility of $h_j(y|\xvec)$ constitutes the scope of the impact a covariate is admitted to have on the whole cCDF.

We assume that each of the $J$ partial transformation functions $h_j(\cdot|\xvec)$, can be approximated by a linear combination of basis functions such that
$h_j(y|\xvec)  = \cvec_j(y,\xvec)^\top \gammavec_j$,
where $\gammavec_j$ is a vector of basis coefficients. Later we assume monotonicity of each partial transformation function in $y$, i.e.\ $h_j^\prime(y|\xvec) = \frac{\partial h_j(y|\xvec)}{\partial y} \ge 0$, which is sufficient but not necessary for an overall monotonic transformation function $h(y|\xvec)$. The complete transformation function and its derivative with respect to $y$ are now given by
\begin{align}\label{eq:full_h}
\begin{split}
h(y|\xvec) = \cvec(y,\xvec)^\top \gammavec, \quad
h^\prime(y|\xvec) = \cvec^\prime(y,\xvec)^\top \gammavec,
\end{split}
\end{align}
with bases $\cvec(y,\xvec) = (\cvec_1(y, \xvec)^\top, \ldots, \cvec_J(y,\xvec)^\top)^\top$ and $\cvec^\prime(y,\xvec) = (\cvec^\prime_1(y, \xvec)^\top, \ldots,\allowbreak \cvec^\prime_J(y,\xvec)^\top)^\top$ for the transformation function and its derivative, respectively, and the stacked vector of all basis coefficients
$\gammavec = (\gammavec_1^\top, \ldots, \gammavec_J^\top)^\top.$

In the following subsection, we introduce a generic and flexible joint basis for $c_j(y,\xvec)$ that does not entail strict assumptions about the relationship between the moments of the response distribution and the respective covariates. Prior distributions, specific bases for the covariate effects, the choice of the references distribution $F_Z$ and a formal definition of our BCTM are covered in Sec.~\ref{sec:priors} to \ref{sec:formaldef}.

\subsection{Generic conditional transformation functions}\label{sec:generic_ct}
Let $\avec_j(y)$ and $\bvec_j(\xvec)$ denote vectors containing basis function evaluations $B_{j1d_1}(y), d_1 = 1, \ldots D_1$ and $B_{j2d_2}(\xvec), d_2 = 1,\ldots, D_2$ for the response and the covariates, respectively, such that
$\avec_j(y)^\top = (B_{j11}(y),\ldots,B_{j1D_1}(y))$, and $\bvec_j(\xvec)^\top = (B_{j21}(y),\ldots,B_{j2D_2}(y))$.
Denoting by $\otimes$ the usual Kronecker product, we then obtain the most general form of partial transformation function in a BCTM as $\cvec_j(y,\xvec)^\top =  (\avec_j(y)^\top \otimes \bvec_j(\xvec)^\top)^\top$, leading to
\begin{align}
\begin{split}
h_j(y |\xvec) &= \cvec_j(y, \xvec)^\top \gammavec_j = (\avec_j(y)^\top \otimes \bvec_j(\xvec)^\top)^\top \gammavec_j = \sum_{d_1=1}^{D_1} \sum_{d_2=1}^{D_2} \gamma_{jd_1d_2} B_{j1d_1}(y)B_{j2d_2}(\xvec)\\
h^\prime_j(y |\xvec) &= \cvec^\prime_j(y, \xvec)^\top \gammavec_j = (\avec_j^\prime(y)^\top \otimes \bvec_j(\xvec)^\top)^\top \gammavec_j = \sum_{d_1=1}^{D_1} \sum_{d_2=1}^{D_2} \gamma_{jd_1d_2} B^\prime_{j1d_1}(y)B_{j2d_2}(\xvec).
\end{split}\label{full_bctm}
\end{align}
Essentially, the Kronecker product establishes a parametric interaction by forming pairwise products of the basis functions $B_{j1d_1}(y)$ and $B_{j2d_2}(\xvec)$. The derivative with respect to $y$ is therefore also a tensor product involving the differentiated basis functions $B^\prime_{j1d_1}(y)=\partial B_{j1d_1}(y)/\partial y $. Specific restrictions on the two components of the tensor product lead to interesting special cases of the partial transformation function:
\begin{itemize}
    \item Setting $\avec_j(y)\equiv1$ leads to simple shift effects that only depend on the covariates.
    \item Setting $\bvec_j(\xvec)\equiv1$ yields an effect of only $y$ that induces changes of the distributional shape (up to other effects).
    \item Linear effects $\avec_j(y)=(1,y)$ induce varying coefficient type effects where covariate effects linearly interact with the responses and  the response takes the role of the interaction variable while the covariates are the effect modifiers.
\end{itemize}

Restricting our generic model to
\begin{align}\label{eq:spec_trans}
h(y|\xvec) = h_Y(y) + h_{\xvec}(\xvec),
\end{align}
leads to a location-shift transformation model that comprises various earlier transformation models as special cases. In this case, only the location of the transformed response depends on the covariates via $h_{\xvec}(\xvec)$ and higher moments are captured unconditionally by the monotonic transformation $h_Y(y)$. This model type is parametrized by restricting the joint basis to $\cvec(y,\xvec)^\top = ((\avec(y)^\top\otimes 1)^\top, (1 \otimes \bvec(\xvec)^\top)^\top)=(\avec(y)^\top, \bvec(\xvec)^\top)^\top$ resulting in the shift transformation model  $\cvec(y,\xvec)^\top \gammavec = \avec(y)^\top \gammavec_1 + \bvec(\xvec)^\top \gammavec_2$.

We are relying on B-splines for the response dimension while various alternatives are available for the covariate dimension (see Sec.~\ref{sec:special_cases} for details). The choice of B-splines for representing $\avec_j(y)$ is mainly determined by the availability of suitable reparameterisations of the corresponding basis coefficients that ensure monotonicity along $y$ in the tensor product for the partial response transformations and well-studied smoothness properties. More precisely, we follow \citet{pyawood} and reparameterize the $D=D_1D_2$ dimensional basis vector
$\gammavec_j=(\gamma_{j11},\ldots,\gamma_{j1D_2},\gamma_{j21},\ldots,\gamma_{jD_1D_2})^\top$ in two steps. First, we set $\gammavec_j= \mSigma_j \betatildevec_j$, where  $\mSigma_j = \mSigma_{D_1} \otimes \mI_{D_2}$, $\mI_{D_2}$ is an identity matrix of size $D_2$, $\mSigma_{D_1}$ is a lower triangular matrix of size $D_1$ with $\Sigma_{D_1,kl}=0$ if $k < l$ and $\Sigma_{D_1,kl} = 1$ if $k \geq l$, and the vector $\betatildevec_j$ is
\begin{align}\label{eq:bt2}
\betatildevec_j = (\beta_{j11}, \ldots, \beta_{j1D_2}, \exp(\beta_{j21}), \ldots \exp(\beta_{j2D_2}), \ldots ,\exp(\beta_{jD_1 1}),\ldots, \exp(\beta_{jD_1 D_2}))^\top.
\end{align}
Starting with a vector $\betavec_j=(\beta_{j11}, \ldots, \beta_{j1D_2}, \beta_{j21}, \ldots \beta_{j2D_2}, \ldots ,\beta_{D_1 1},\ldots, \beta_{jD_1 D_2})^\top\in\mathbb{R}^D$ of unconstrained parameters, these choices ensure that the vector of basis coefficients $\gammavec_j$ is strictly increasing along the response dimension $y$ which, in turn, implies a tensor product effect that is monotonically increasing along $y$. The complete model vectors of basis coefficients are then given by $\betavec = (\betavec_1^\top, \ldots, \betavec_J^\top)^\top$ and $\betatildevec =(\betatildevec_1^\top, \ldots, \betatildevec_J^\top)^\top$, while the overall model matrix $\mSigma$ is block diagonal with $\mSigma_1, \ldots, \mSigma_J$ as diagonal elements. We formalize the monotonicity of $h(y|\xvec)$ along the $y$ dimension in the following theorem. 
\begin{theorem}[Monotonically increasing transformation function along $y$]\label{theo1}
Let $h(\cdot|\xvec):\mathcal{S}\to\dsR$ be the transformation function \eqref{eq:add_decomp} with basis representation \eqref{eq:full_h}  and partial transformation functions $h_j$ as in \eqref{full_bctm}. Let furthermore $\gammavec_j=\mSigma_j\tilde\betavec_j$ with $\tilde\betavec_j$ as in \eqref{eq:bt2} and  $\mSigma_j = \mSigma_{D_1} \otimes \mI_{D_2}$ as defined above. Then,  $h(\cdot|\xvec)$ is monotonically increasing, that is for all $y_1,y_2\in\mathcal{S}$ with $y_1<y_2$ we have $h(y_1|\xvec)\leq h(y_2|\xvec)$.
\end{theorem}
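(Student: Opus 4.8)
The plan is to reduce the multivariate claim to the classical monotonicity property of a single univariate B-spline expansion. Since $h(\cdot|\xvec)=\beta_0+\sum_{j=1}^J h_j(\cdot|\xvec)$ and a finite sum of non-decreasing functions is again non-decreasing, it suffices to prove that each partial transformation function $h_j(\cdot|\xvec)$ is non-decreasing in $y$ for every fixed $\xvec$. I would therefore fix $j$ and $\xvec$ and work with $h_j$ alone for the remainder.

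The first step is to regroup the double sum in \eqref{full_bctm} by the covariate index. Writing
\[ h_j(y|\xvec)=\sum_{d_2=1}^{D_2} B_{j2d_2}(\xvec)\,g_{d_2}(y),\qquad g_{d_2}(y):=\sum_{d_1=1}^{D_1}\gamma_{jd_1d_2}\,B_{j1d_1}(y), \]
exhibits $h_j(\cdot|\xvec)$ as a linear combination of the $D_2$ univariate B-spline expansions $g_{d_2}$ in $y$, with mixing weights $B_{j2d_2}(\xvec)$.

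Next I would establish that each $g_{d_2}$ is non-decreasing. Using $\gammavec_j=\mSigma_j\betatildevec_j$ with $\mSigma_j=\mSigma_{D_1}\otimes\mI_{D_2}$ and the lower-triangular-ones form of $\mSigma_{D_1}$, a direct computation yields $\gamma_{jd_1d_2}=\sum_{l=1}^{d_1}\tilde\beta_{jld_2}$, so that the consecutive differences along the response index satisfy $\gamma_{jd_1d_2}-\gamma_{j(d_1-1)d_2}=\tilde\beta_{jd_1d_2}=\exp(\beta_{jd_1d_2})>0$ for $d_1\geq 2$ by \eqref{eq:bt2}. Thus, for each fixed $d_2$, the coefficient sequence $\gamma_{j1d_2}<\cdots<\gamma_{jD_1d_2}$ is strictly increasing, and the shape-preserving property of B-splines — concretely, the derivative identity $g_{d_2}'(y)=\sum_{d_1\geq 2}\kappa_{d_1}\,(\gamma_{jd_1d_2}-\gamma_{j(d_1-1)d_2})\,\tilde B_{d_1}(y)$ with positive constants $\kappa_{d_1}$ and non-negative lower-order B-splines $\tilde B_{d_1}$ — shows $g_{d_2}'(y)\geq 0$, i.e.\ $g_{d_2}$ is non-decreasing.

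The final step is to assemble the weighted combination, and this is where I expect the main obstacle to lie. Provided the covariate basis functions are non-negative, $B_{j2d_2}(\xvec)\geq 0$ — as holds for the B-spline and the other standard bases employed here — the function $h_j(\cdot|\xvec)$ is a non-negative combination of the non-decreasing functions $g_{d_2}$ and hence itself non-decreasing; summing over $j$ then completes the proof. The delicate point is exactly this sign condition: the reparametrization in \eqref{eq:bt2} only forces monotonicity column-by-column (for each fixed $d_2$), and this transfers to the tensor-product effect solely because the mixing weights $B_{j2d_2}(\xvec)$ cannot change sign. I would therefore state the non-negativity of the covariate basis as an explicit hypothesis (noting that shift terms with $\avec_j(y)\equiv 1$ are constant in $y$ and hence impose no such requirement), rather than treat it as automatic; the remaining ingredient, the B-spline derivative identity, I would simply cite in the form stated above so that monotonicity reduces cleanly to the coefficient differences already computed.
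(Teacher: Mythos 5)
Your proof is correct and follows essentially the same route as the paper's: the reparameterization $\gammavec_j=\mSigma_j\betatildevec_j$ makes the coefficient differences $\gamma_{jd_1d_2}-\gamma_{j(d_1-1)d_2}=\exp(\beta_{jd_1d_2})>0$ along the response index, the standard B-spline derivative identity then yields monotonicity of each column function $g_{d_2}$, and non-negativity of the interacting basis $B_{j2d_2}(\xvec)$ transfers this to the tensor product and, by summation, to $h(\cdot|\xvec)$. Your point that the sign condition $B_{j2d_2}(\xvec)\ge 0$ deserves to be an explicit hypothesis (it holds for the B-spline, indicator and suitably rescaled bases the paper employs, but not for raw linear bases taking negative values) is a fair sharpening of the statement rather than a deviation from the paper's argument.
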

\noindent A proof of Theorem \ref{theo1} can be found in the Supp.~Part A.

In contrast to the MLT model introduced by \cite{hothorn2018} that uses Bernstein polynomials as a basis for nonlinear effects, the BCTM is supplemented with a smoothness-inducing penalty through its prior and is therefore in principle less restrained regarding the number of model terms $J$ and functional complexity in direction of the covariates.

Of course, other basis function representations than B-splines are immediately conceivable for $\avec_j(y)$. The main requirements for a suitable specification include the ability to incorporate monotonicity constraints, the analytical availability of the basis functions and their derivatives, and the numerically stable evaluation of these. While B-splines fulfill these requirements, investigating other choices and their properties is a promising avenue for future research.

\subsection{Prior specifications}\label{sec:priors}

Overfitting of unregularized splines can be avoided in our Bayesian framework by enforcing smoothness and regularization through shrinkage priors. For the special case of B-splines, Bayesian P-splines assign multivariate Gaussian priors to the regression coefficient vectors. We follow \citet{kneib2019modular} and adopt this principle to tensor product terms such that the prior for the coefficient vector $\betavec_j$ associated with one of the partial transformation functions $h_j$ in \eqref{full_bctm} is multivariate Gaussian with expectation zero and precision matrix
\begin{align}\label{eq:priorprec}
\mK_j\equiv\mK_{j}(\tau^2_j,\omega_j) =\frac{1}{\tau_{j}^2}\Bigl\lbrack \omega_j(\mK_{1j}\otimes\mI_{D_2}) + (1-\omega_j)  (\mI_{D_1}\otimes\mK_{2j})\Bigr\rbrack,
\end{align}
where $\mK_{j1}$ and $\mK_{j2}$ are potentially rank deficient prior precision matrices of dimensions ($D_1\times D_1$) and ($D_2\times D_2$), respectively, controlling the type of smoothness required along the response and the covariate dimension, respectively. For the response dimension, we set $\mK_{1j} = \mD_{1j}^\top \mD_{1j}$ where $\mD_{1j}$ is a $(D_1-2) \times D_1$ partial first difference matrix consisting only of zeros except that $\mD_{j}[d_1,d_1+1] = - \mD_{j}[d_1,d_1+2]=1$ for $d_1=1,\ldots, D_1-2$ \citep{pyawood,pya2010additive}.  The prior precision matrix $\mK_{2j}$ shrinks in the direction of the respective covariate and the specific choice depends on the considered covariate effect of interect (see Sec.~\ref{sec:special_cases} for some examples). For monotonic nonlinear effects, the resulting penalty is quadratic in the (non-exponentiated) parameters $\betavec_j$. This corresponds to log differences in $\gamma_{jd_1d_2}$ for $d_1>2$, such that a first order random walk prior penalizes the squared differences between adjacent $\beta_{jd_1 d_2}$, resulting in shrinkage towards a straight line, similar to second order random walk penalties for univariate P-splines \citep{pyawood}.  The complete prior precision matrix $\mK\equiv\mK(\tauvec^2,\omegavec)$ is given as the block diagonal matrix with matrices $\mK_{j}$ as diagonal elements. We formalize the prior for $\gammavec_j$ in the following proposition.
\begin{proposition}[Prior for $\gammavec_{j}$]
Let $p_\beta(\betavec_j|\tau_j^2)\propto\exp\left(-\tfrac{1}{2\tau_j^2}\betavec_j^\top\mK_j^{-}\betavec_j\right)$ be the partially improper multivariate Gaussian prior with generalized inverse $\mK_j^{-}$ of the prior precision matrix in \eqref{eq:priorprec}. Assume furthermore for notational simplicity that $D_2=1$ such that $\gamma_{jd1d2}\equiv\gamma_{jd1}$. Then, the prior for $\gammavec_j$ is given by
\[
p_\gammavec(\gammavec_j|\tau^2)\propto p_\betavec\left(\gamma_{j1},\log(\gamma_{j2}-\gamma_{j1}),\ldots,\log(\gamma_{jD}-\sum_{i=1}^{D-1}(D-i)\gamma_{ji})\right)\prod_{k=1}^D\frac{1}{\gamma_{jk}-\sum_{i=1}^{k-1}(k-i)\gamma_{ji}}
\]
\end{proposition}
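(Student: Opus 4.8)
The statement is an instance of the change-of-variables (transformation) theorem for probability densities: the induced prior on $\gammavec_j$ is the push-forward of $p_\beta$ under the deterministic, invertible reparameterisation $T:\betavec_j\mapsto\gammavec_j$, obtained by composing the elementwise (partial) exponential map $\betavec_j\mapsto\betatildevec_j$ of \eqref{eq:bt2} with the linear map $\betatildevec_j\mapsto\mSigma_j\betatildevec_j=\gammavec_j$. The plan is therefore to (i) verify that $T$ is a diffeomorphism from $\dsR^{D}$ onto its image, (ii) invert $T$ to obtain the arguments at which $p_\beta$ is evaluated, and (iii) compute the Jacobian determinant to obtain the multiplicative correction factor. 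Since $D_2=1$ we have $\mSigma_j=\mSigma_{D_1}$, so no Kronecker bookkeeping is needed and $D=D_1$.

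For steps (i)--(ii) I would write the composite map coordinatewise. Because $\mSigma_{D_1}$ is unit lower triangular, $\gamma_{jk}$ is a partial sum of the coordinates of $\betatildevec_j$ and hence depends only on $\beta_{j1},\dots,\beta_{jk}$; in particular $T$ is triangular. Inverting the linear step through the first-difference matrix $\mSigma_{D_1}^{-1}$ recovers $\betatildevec_j$ from $\gammavec_j$, and inverting the elementwise exponential then produces $\beta_{j1}$ and the logarithmic expressions $\beta_{jk}=\log(\cdot)$ as functions of the $\gamma_{ji}$, which are exactly the arguments fed into $p_\beta$. The domain matters here: $T$ is a bijection only onto the cone of coefficient vectors whose successive transformed differences are strictly positive --- precisely the monotonicity region of Theorem~\ref{theo1} --- so that every logarithm is well defined, and I would record this cone as the support of the resulting density.

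For step (iii), triangularity of $T$ makes the Jacobian $\partial\gammavec_j/\partial\betavec_j$ lower triangular, so its determinant is the product of the diagonal entries. The entry in row $k$ is the derivative of the $k$-th coordinate of $\betatildevec_j$, i.e.\ $1$ for $k=1$ and $\exp(\beta_{jk})$ for $k\ge2$; re-expressing each such factor via the inverse map turns it into the corresponding $\gamma$-difference appearing in the denominators of the product. The change-of-variables formula $p_\gamma(\gammavec_j)=p_\beta(T^{-1}(\gammavec_j))\,\lvert\det DT^{-1}(\gammavec_j)\rvert$ then assembles $p_\beta$ evaluated at the inverse-map arguments together with the reciprocal-Jacobian product, yielding the claimed closed form. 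Note that the linear step contributes $\det\mSigma_{D_1}=1$ and therefore drops out, which is why only the exponential step survives in the product.

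The two determinant computations are routine; the only real care is needed in the algebra of step (ii), namely composing the inverse linear map with the inverse exponential and simplifying the $\gamma$-expressions so that both the arguments of $p_\beta$ and the denominators match the stated forms, together with pinning down the support so the density lives on the monotone cone. The general case $D_2>1$ follows by the same argument, using $\det\mSigma_j=\det(\mSigma_{D_1}\otimes\mI_{D_2})=(\det\mSigma_{D_1})^{D_2}=1$ and the block structure to reduce the Jacobian computation to the $D_2=1$ case blockwise.
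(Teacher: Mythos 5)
Your proposal is correct and takes essentially the same route as the paper's own proof, which likewise invokes the multivariate change-of-variables theorem for the composition of the elementwise (partial) exponential map with the cumulative-sum map $\mSigma_{D_1}$ --- the paper simply applies the theorem twice, once per factor, whereas you compose first and exploit triangularity together with $\det\mSigma_{D_1}=1$, an immaterial difference. Your additional remarks on the support cone and the reduction of the $D_2>1$ case to the $D_2=1$ case go slightly beyond the paper's one-line argument but change nothing in substance.
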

\begin{proof}
The proof follows directly by applying the multivariate change of variable theorem twice to the transformation $g_2\circ g_1:\dsR^D\to\dsR^D$ with $\tilde\beta_1=g_1(\beta_{j1})=\tilde\beta_{j1}$, $\tilde\beta_{jk}=g_1(\beta_{jk})=\exp(\beta_{jk})$, $k=2,\ldots,D$ and $\gamma_{jk}=g_2(\tilde\beta_{jk})=\sum_{i=1}^k\tilde\beta_{ji}$, $k=1,\ldots,D$.
\end{proof}
The amount of smoothness induced by the precision matrix (\ref{eq:priorprec}) is controlled by the overall smoothing variance $\tau_{j}^2>0$ and the weight parameter $\omega_j\in[0,1]$. Following \citet{kneib2019modular}, we assume a discrete prior for the latter which has the advantage that generalized determinants of $\mK_j$ can be pre-computed which considerably facilitates the numerically efficient implementation while still enabling anisotropic amounts of smoothness along the response and the covariate dimension. A uniform prior on a moderate number of equi-spaced values is used as a default for $\omega_j$. For the smoothing variance $\tau_j^2$, we consider two alternatives: Standard inverse gamma (IG) priors $\tau_{j}^2 \sim \mathrm{IG}(a_{j},b_{j})$ where the hyperparameters are chosen among popular combinations such as $a_{j}=1$, $b_{j}=0.001$ to mimic a weakly informative setting, and scale-dependent (SD) hyperpriors as suggested in \citet{KleKne2016}. The latter results in a Weibull prior for $\tau_j^2$ with shape parameter $0.5$ and scale parameter $\theta$ determined from a scaling criterion on expected effect sizes. We transfer this concept to partially monotonic tensor product effects where, to achieve numerical stability, it is important to control the variation of those parameters exponentiated in (\ref{eq:bt2}).  More precisely, we consider the scaling criterion
$
 \mathbb{P}\left(
 \max_{d_1=2,\ldots,D_1, d_2=1,\ldots,D_2}
 |\beta_{jd_1,d_2}|
 \le c\right) = 1-\alpha
$
with user-specified values for $c$ and $\alpha$. To determine the marginal prior distribution of $\tau_j^2$ required to evaluate the scaling criterion, we follow a simulation-based approach to marginalize out any additional hyperparameters. From the support of the exponential function, $c=3$ and $\alpha=0.01$ are useful standards also used later in our empirical studies.

In a last step, we collect all model parameters in the vector
\begin{align*}
\varthetavec= (\beta_0,\betavec_1, \ldots, \betavec_J, \tau_{1}^2, \ldots, \tau_{J}^2, \omega_1, \ldots, \omega_J)^\top = (\betavec^\top, (\tauvec^2)^\top, \omegavec^\top)^\top
\end{align*}
with joint prior $\pi_{\varthetavec}(\varthetavec)$ which can be factorized into products of the individual priors. The coefficient $\beta_0$ denotes the intercept of the model, while $\betavec$, $\tauvec^2$, $\omegavec$ are used to denote all basis coefficients, smoothing variances and anisotropy weights, respectively.

\subsection{Bases for the covariate effects}\label{sec:special_cases}

We highlight special cases of bases  $\bvec(\xvec)$ relevant for our applications:
\begin{itemize}
\item \textit{Linear effects.} The basis for linear effects of covariates $x_1, \ldots, x_p$ collected in $\xvec$ is $\bvec_j(\xvec)^\top=(x_1, \ldots, x_p)^\top$ and we use a non-informative prior with $\mK_{2j} = \bm{0}$. This also applies to the overall intercept $\beta_0$ when centering the partial transformation functions.
\item \textit{Random effects.} Random effects (or frailties) are based on a grouping indicator $g \in \{1, \ldots, G\}$. The resulting $G$-dimensional basis vector $\bvec_{j}(g)$ has entry one if $y$ belongs to group $g$ and zero otherwise and we set $\mK_{2j} = \mI_G$ for i.i.d.\ random effects.
\item \textit{Discrete spatial effects.} Similar to random effects, a spatial effect of a discrete spatial variable $s \in \{1, \ldots, S\}$ is constructed as an indicator  with entries in the $S$-dimensional basis vector
$\bvec_{j}(s)$ set to one if $y$ belongs to region $s$ and zero otherwise. We induce  spatial smoothing in form of a Gaussian Markov random field  \citep[GMRF][]{rue2005gaussian} prior. The  precision matrix $\mK_{2j}$ reflects the spatial orientation of the data, i.e.~we define two regions as neighbours if they share a common border.
\end{itemize}
For an overview on possible bases for discrete (count) responses we refer the reader to \cite{bdctm}.

\subsection{Choice of the reference distribution}\label{sec:reference}
As already stated, it is the task of the conditional transformation function $h(y|\xvec)$ to transform the response values conditionally on the explanatory variables $\xvec$ such that they follow the reference distribution $F_Z$. In that light, $F_Z$ plays a similar role as the inverse known link function in prominent model classes such as generalized linear models, but is less restrictive in the sense that the resulting conditional distribution does not have to be of known type. From a modelling perspective, it guarantees that the resulting estimated conditional density function integrates to one without requiring complex constraints. Note that no unknown parameters are included in $F_Z$ and no restrictions besides continuity and log-concavity of the reference density $f_Z(y) = \frac{\partial F_Z(y)}{\partial y}$ are required. In theory, any cCDF can be represented as a BCTM when the transformation function is chosen flexible enough. However, in practice the actual ability to represent various types of cCDFs is limited by the choices made for parameterizing the transformation function. For example, when the reference distribution has light tails, one requires considerable flexibility in the transformation function to enable the representation of heavy-tailed distributions. Similarly, restricting the shape of the influence that the covariates can have on the transformation function also imposes restrictions on the cCDFs that can be generated via a BCTM.

Other relevant aspects for the choice of the reference distribution entail (i) interpretation, (ii) convenience, and (iii) theoretical properties. For the sake of interpretation, it is advised to consider further characteristics such as skewness or positivity of $y$ when choosing $F_Z$. Prominent options also used in the applications in Sec.~\ref{sec:apps} are the standard normal CDF, $F_Z(z)=\Phi(z)$, the standard logistic CDF $F_Z(z)=F_{\text{SL}}(z)=(1+\exp(-z))^{-1}$ (leading to (non-)proportional odds models) and the minimum extreme value distribution, $F_Z(z)=F_{\text{MEV}}(z)=1-\exp(-\exp(z))$ (leading to (non-)proportional hazards models). Note that  simple transformation models of type \eqref{eq:spec_trans} are interpretative in the sense that the term $h_{\xvec}(\xvec)$ constitutes the log odds ratio if $F_Z(z)=F_\text{SL}(z)$ and the log hazards ratio if $F_Z(z)=F_\text{MEV}(z)$, a previous result we use in Sec.~\ref{sec:lung}.

The convenience argument favours distributions that are well studied on the one hand and numerically easy to handle on the other hand. Finally, certain properties of the resulting estimates also depend on the choice of the reference distribution. For example, restricting the reference distribution to have log-concave densities ensures that (under typical additional mild regularity conditions) that the MLE is unique and consistent which, in turn, often implies unimodal posteriors that are easier to explore with MCMC schemes. For an overview of the numerous possibilities of reference distributions that come with CTMs, see \cite{hothorn2018}.

\subsection{Transformation densities}\label{sec:trafo_ll}

In this section, we introduce the conditional transformation densities $f_Y(y|\betavec)$ given the  vector of basis coefficients $\betavec$ (before the reparameterization). To emphasize that  $\gammavec$ is a partially nonlinear reparameterization of $\betavec$, we write $\gammavec(\betavec)$.

\paragraph{Continuous responses} The density and log-density can easily be derived from equation \eqref{eq:ctm} together with the parametrization of $h$ and $h^\prime$ in equation \eqref{full_bctm} such that
\begin{align}\label{cont_ll}
f_Y(y|\betavec) &= \frac{\partial F_{Y|\mX={\xvec}}(y)}{\partial y} = f_Z\left(h(y|\xvec)\right)h^\prime(y|\xvec)  = f_Z(\cvec(y, \xvec)^\top \gammavec(\betavec))\cvec^{\prime}(y, \xvec)^\top \gammavec(\betavec),
\end{align}
where $f_Z$ denotes the density of the chosen reference distribution. In theory, for any absolute continuous response distribution $F_Y$ and reference distribution $F_Z$ with log-concave density $f_Z$, there exists a unique, monotonically increasing transformation function $h$, such that $F_{Y|\mX=\xvec}(\cdot)=F_Z(h(\cdot|\xvec))$ \citep[see Cor.~1 of][]{hothorn2018}.

It is important to note that both for univariate and bivariate effects involving $y$, the part of the effect that belongs to the null space of $\mK_j$ consists of all location shifts and linear effects in $y$. In the context of BCTMs with the popular choice $F_Z=\Phi$, this means that the penalty shrinks towards the Gaussian location-scale family. In other words, the null space of the rank-deficient precision matrix consists of all Gaussian conditional distribution functions. This observation can be put to use when considered from the perspective of SD priors for the variances as described in \cite{KleKne2016}.

\paragraph{Discrete ordinal responses} In case of discrete ordinal responses with a finite sample space where $Y \in \{y_1, \ldots, y_K\}$, the corresponding conditional density function is given by

\begin{align}\label{disc_ll}
\begin{split}
f_Y(y_k |\betavec)=
\begin{cases}
    F_Z(\cvec(y_1,\xvec)^\top \gammavec(\betavec)) & k = 1\\
     F_Z(\cvec(y_k,\xvec)^\top \gammavec(\betavec)) - F_Z(\cvec(y_{k-1},\xvec)^\top \gammavec(\betavec)) & k = 2, \ldots, K-1\\
     1 - F_Z(\cvec(y_{K-1},\xvec)^\top \gammavec(\betavec)) & k = K.\\
    \end{cases}
    \end{split}
\end{align}
For countably infinite sample spaces (as e.g.\ for count data) with $Y \in \{y_1,y_2,y_3, \ldots\}$, the density is given by
\begin{align}\label{count_ll}
f_Y(y_k |\betavec) &=\begin{cases}
     F_Z(\cvec(y_1,\xvec)^\top \gammavec(\betavec)) & k = 1\\
     F_Z(\cvec(y_k,\xvec)^\top \gammavec(\betavec)) - F_Z(\cvec(y_{k-1},\xvec)^\top \gammavec(\betavec)) & k >1.
    \end{cases}
\end{align}
\paragraph{Censored responses} The Bayesian conditional transformation model incorporates all forms of random censoring. In the presence of censored observations, only the likelihood has to be adapted while the transformation function remains the same. The likelihood contributions for right-, left, and interval-censored continuous or discrete observations respectively are then given by
\begin{align}
     1 - F_Z(\cvec(\underline{y},\xvec)^\top\gammavec(\betavec)) & \text{ for } y \in (\underline{y}, \infty) & \text{ ``right censored"}\nonumber\\
     F_Z(\cvec(\overline{y},\xvec)^\top\gammavec(\betavec))    & \text{ for } y \in (-\infty, \overline{y}) & \text{ ``left censored"}\label{cens_ll}\\
     F_Z(\cvec(\overline{y},\xvec)^\top\gammavec(\betavec)) - F_Z(\cvec(\underline{y},\xvec)^\top\gammavec(\betavec)) &  \text{ for } y \in (\underline{y}, \overline{y}] & \text{ ``interval censored"}.\nonumber
\end{align}%
It is also possible to adapt  densities for truncated observations \citep{hothorn2018}.

\subsection{Formal definition of BCTMs}\label{sec:formaldef}

We end this section with a formal definition of BCTMs. 
\begin{definition}[BCTM]
The quadruple $\big(\varthetavec,F_Z,\cvec, \pi_\vartheta(\cdot)\big)$ of unknown model parameters $\varthetavec$, a choice for the basis $\cvec$, the reference distribution $F_Z$ and joint prior $\pi_\vartheta$ \textit{is called Bayesian conditional transformation model (BCTM)}.
\end{definition}

\section{Posterior Inference}\label{sec:mcmc}

\subsection{Posterior and estimation via MCMC}

Assuming conditional independence the joint posterior is given by
\begin{align}\label{eq:post_beta}
 p(\betavec, \tauvec^2, \omegavec|\yvec) \propto
 \prod_{i=1}^n f_Y(y_i | \betavec)
 \left[\pi(\beta_0)
 \prod_{j=1}^{J} [\pi(\betavec_j |\tau_{j}^2,\omega_j) \pi(\tau_{j}^2),\pi(\omega_j)]
 \right].
\end{align}
\mbox{To obtain samples from  \eqref{eq:post_beta} we use an MCMC sampler consisting of three alternating steps:}
\begin{itemize}[\underline{Step~1.}]
    \item[\underline{Step~1.}] Sample from $p(\betavec|\tauvec^2, \omegavec, \yvec)$ using the NUTS.
    \item[\underline{Step~2.}] For $j=1,\ldots,J$, sample from $p(\tau_{j}^2|\betavec_j,\yvec)$ using a Gibbs sampler in case of an IG prior or  iteratively weighted least squares (IWLS) proposals in case of SD priors.
   \item[\underline{Step~3.}] For $j=1,\ldots,J$, sample $\omega_j$ with a Gibbs step from its discrete full conditional.
\end{itemize}
The resulting MCMC samples can then be used to estimate the conditional distribution $F_{Y|\mX=\xvec}(y)$ as $\hat{F}_{Y|\mX=\xvec,\yvec}(y) = F_Z(\hat{h}(y|\xvec))$
where, for example, $\hat{h}(y|\xvec)$ is the posterior mean estimate
$\hat{h}(y|\xvec)= \cvec(y,\xvec)^\top\frac{1}{S}\sum_{s=1}^S \gammavec^{[s]}$
with posterior samples $\gammavec^{[1]},\ldots,\gammavec^{[S]}$. Similarly, a posterior mean estimate for $F_{Y|\mX=\xvec}(y)$ can be determined as
$\hat{F}_{Y|\mX=\xvec}=\frac{1}{S}\sum_{s=1}^SF_Z(\cvec(y,\xvec)^\top\gammavec^{[s]}).$
The posterior samples also provide us with the basis of deriving the complete posterior distribution of $h(y|\xvec)$, $F_{Y|\mX=\xvec}(y)$, and any transformation thereof.

\noindent\textbf{Updating the basis coefficients  at \underline{Step 1.}}
Basis coefficients are updated jointly by sampling from the log full conditional
\begin{align*}
\log(p(\betavec |\tauvec^2,\omegavec,\yvec)) \propto \sum_{i=1}^n f_Y(y_i|\betavec) - \frac{1}{2} \betavec^\top \mK(\tauvec^2,\omegavec) \betavec,
\end{align*}
where the first term arises from one of the likelihoods described in Sec.~\ref{sec:trafo_ll} and the second term arises from the Gaussian prior). High dimensionality and strong dependencies among  coefficients (stemming partly from the monotonicity constraints) aggravate sampling from the posterior distribution. This is further exacerbated by the mixed linear-nonlinear dependence of the transformation function on $\betatildevec$, rendering e.g.~random-walk Metropolis algorithms slow and inefficient. One possible remedy lies in including gradient information as done by HMC. This, however, comes with the drawback that two additional tuning parameters (step size $\epsilon$ and  number of leapfrog steps $L$) have to be set manually. To avoid this tricky task, we implement NUTS with dual averaging \citep{Nesterov2009} that uses Hamiltonian principles for efficient exploration of the target distribution of $\betavec$ in an adaptive fashion The adaptive nature of NUTS enables a streamlined estimation process, effectively abolishing the need for costly preliminary tuning runs at the expense of some additional computation time per iteration which is owed mainly to the more sophisticated proposals.

The required gradient of the unnormalized log-posterior of the basis coefficients vector $\betavec$ for continuous responses  is given by 
\begin{align*}
s(\betavec)\equiv\frac{\partial\log(p(\betavec|\tauvec^2,\yvec))}{\partial \betavec}  &=
\sum_{i=1}^n\left[\cvec(y_i,\xvec_i)^\top \mSigma \mC
\frac{f_Y^\prime(y_i|\betavec)}{f_Y(y_i|\betavec)} +
\frac{\cvec^\prime(y_i,\xvec_i)^\top \mSigma \mC}{\cvec^\prime(y_i,\xvec_i)^\top \mSigma\betatildevec}\right] -  \mK \betavec,
\end{align*}
where $\mC$ is a diagonal matrix with entries
$C_{dd} = 1$ if $\tilde{\beta}_{d} = \beta_{d}$,
    $C_{dd} =\exp(\beta_{d})$ otherwise,
and similar expressions can straightforwardly be derived for discrete or censored responses.

Potentially  flat parts of a fitted transformation function based on the reparameterization in Sec.~\ref{sec:generic_ct} demand the parameters $\tilde{\beta}=\exp(\beta)$ to be close to zero and thus the corresponding $\beta$ to approach minus infinity. For NUTS, this does not result in overflow errors, but can lead to divergent transitions in the sampling path and NUTS trees with large tree depth as the different curvatures demand very different step sizes. Using a non-centered parametrization \citep{papaspiliopoulos2007general} as a remedy is not feasible in a straightforward manner, because of the nonlinear transformation in the coefficient vectors.  Instead, we found it helpful to increase the goal acceptance rate, forcing the sampler to take smaller steps, which is a small price to pay for the non-occurrence of divergencies. If the problem persists it is possible to drop unidentified (i.e.~reparameterized coefficients that should be close to zero) in each iteration judging by the eigenvalues of the matrix square root of the Hessian of the posterior at \eqref{eq:post_beta} in an efficient way \citep{pyawood}. 

Furthermore, we resort to  augmented precision matrices, e.g.~$\mK_{j} =\frac{1}{\tau_{j}^2}[\omega_j\mK_{1j} + (1-\omega_j)\mK_{2j}] + 10^{-6}\mI$ to ensure positive definiteness and therefore a soft threshold for  coefficient variances \citep{andrinopoulou2018improved}. The NUTS warm-up phase can often be supported by standardizing each covariate or by rescaling them to (0,1). Both measures can facilitate mass matrix adaption. Regarding sampling efficiency, we found that using SD priors for the smoothing variances can decrease run times and improve the effective sample size.

\noindent\textbf{Updating the smoothing variances at \underline{Step 2.}}
When using an IG prior for the smoothing variances, they can be updated directly with a Gibbs step from the full conditional
$\tau_{j}^2 | \cdot \sim \mathrm{IG}\left(a_{j} + \frac{\mathrm{rk}(\mK_{j})}{2}, b_{j} + \frac{1}{2} \betavec_j^\top \mK_{j} \betavec_{j} \right).$
For the SD prior, updates can be implemented via IWLS proposals of log-variances following  \citet{KleKne2016}. 

\noindent\textbf{Updating the weights at \underline{Step 3.}}
The updates of the weights are straightforward using Gibbs sampling due to their discrete prior structure \citep{kneib2019modular}.

\noindent\textbf{Computational details}
While BCTMs are pretty robust regarding the choice of  hyperparameters, varying them can improve computational speed and stabilize estimates that involve a monotonicity constraint. All results shown in Secs.~\ref{sec:sims}, \ref{sec:apps} were obtained with $4,000$ MCMC iterations with a NUTS warm-up phase of $2,000$ and a burn-in  of  $2,000$.
 Computations were carried out in R version 4.1.0 \citep{Rlang}. To improve computing time, parts of the sampler were programmed using  \texttt{Rcpp} \citep{Rcpp}. The \texttt{MASS} matrix adaption scheme was adopted from  \texttt{adnuts} \citep{adnuts}. 

\subsection{Estimation of the cCDF}

The resulting $S$ MCMC samples  can  be used to estimate the cCDF $F_{Y|\mX=\xvec}(y)$ as $\hat{F}_{Y|\mX=\xvec,\yvec}(y) = F_Z(\hat{h}(y|\xvec))$
where, for example, $\hat{h}(y|\xvec)$ is the posterior mean estimate
$\hat{h}(y|\xvec)= \cvec(y,\xvec)^\top\frac{1}{S}\sum_{s=1}^S \gammavec^{[s]}$
with posterior samples $\gammavec^{[1]},\ldots,\gammavec^{[S]}$. Similarly, a posterior mean estimate for $F_{Y|\mX=\xvec}(y)$ can be determined as
$\hat{F}_{Y|\mX=\xvec}=\frac{1}{S}\sum_{s=1}^SF_Z(\cvec(y,\xvec)^\top\gammavec^{[s]}).$
The posterior samples also provide us with the basis of deriving the complete posterior distribution (including credible intervals) of $h(y|\xvec)$, $F_{Y|\mX=\xvec}(y)$, and any transformation thereof.

\subsection{Model choice and variable selection }\label{sec:model_sel}
For model selection, we use the  Watanabe-Akaike information criterion \citep[WAIC][]{watanabe2010}. It can be seen as approximation to computationally expensive cross validation (CV) and is conveniently computed from $s=1, \ldots, S$ posterior samples. We validated WAIC against CV in some of our applications and found good agreements that support using information criteria as the basis for model choice and variable selection.

The WAIC overcomes certain limitations of the DIC \citep[DIC;][]{spiegelhalter2002} such as its dependence on the posterior mean as a specific point estimate or the potential of observing negative effective parameter counts. It is given by
$
\text{WAIC} = (-2l_{\text{WAIC}} + 2p_{\text{WAIC}}),
$
where $l_{\text{WAIC}} = \sum_{i=1}^n \left( \frac{1}{S} \sum_{s=1}^S f_Y(y_i| \betavec^{[s]})\right)
$ and
$ p_{\text{WAIC}} = \sum_{i=1}^n \text{Var}(\log(f_Y(y_i|\betavec))).
$ In the regression literature, information criteria like the DIC and the WAIC are primarily used to discriminate between different types of response distributions and predictor specifications \citep[e.g.][]{klein2015bayesian}. In the holistic approach of BCTMs, the transformation function determines both the response distribution and  the ``predictor'' which is why it is sufficient to use information criteria to compare different (partial) transformation function specifications that differ in flexibility and interaction structure. We also considered the deviance information criterion (DIC) as introduced by \cite{spiegelhalter2002} which yielded similar results and is therefore omitted in the following.

\section{Simulations}\label{sec:sims}

We conducted simulations to evaluate the empirical performance of BCTMs to recover the true data generating process compared to several competing methods from the literature (Sec.~\ref{sec:sim1}) and to provide valid uncertainty estimates by means of coverage rates (Sec.~\ref{sec:sim2}). 

\subsection{Recovering the conditional distribution}\label{sec:sim1}

In this section, we mimic the simulation design of \citet{hothorn2014} to benchmark our BCTM against its frequentist counterpart, the MLT as implemented in the R-package  \citep[\textbf{mlt},][]{hothorn2017package}, Bayesian GAMLSS \citep{klein2015bayesian}, and Bayesian semiparametric quantile regression \citep{waldmann2013bayesian}. For both Bayesian benchmarks, we use the R package \textbf{bamlss} \citep{umlauf2018bamlss}.

\noindent\textbf{Simulation design}
For datasets of size $n=200$, we generate two covariates  as i.i.d. realizations via $x_1 \sim U[0,1]$ as well as $x_2 \sim U[-2,2]$. The response $y$ is assumed to follow a heteroscedastic varying coefficient model (VCM)
\begin{align}\label{sim_vcm}
Y &= \frac{1}{x_1 + 0.5} x_2 + \frac{1}{x_1 +0.5}\epsilon, \quad \epsilon \sim \mathrm{N}(0,1),
\end{align}
such that an appropriate CTM has to emulate a Gaussian location-scale model under the premises that the mean depends on the nonlinear varying coefficient $(x_1+0.5)^{-1}$ for $x_2$ and that the variance is a nonlinear function of $x_1$. To analyse the stability in the presence of noise variables, we consider six scenarios, where $p = 0, \ldots, 5$ i.i.d.~realizations from the standard uniform $U[0,1]$ with zero influence on the response are added. The complete vector of covariates is denoted by $\xvec_p = (x_1, x_2, x_3, \ldots, x_{p+2})^\top$.

\noindent\textbf{Benchmark methods} For each of the resulting six scenarios, we fit 
\begin{itemize}
\item Lin.~BCTM $\big( \varthetavec, \Phi,((1,y) \otimes (1,\xvec_p^\top))^\top,\allowbreak \pi_\vartheta(\varthetavec)\big)$: a restricted BCTM consisting of simple linear interactions
\item Lin.~MLT: a linear MLT of the same type 
\item Full BCTM $\big(\varthetavec,\Phi,(\avec(y)^\top \otimes ( \bvec(x_1)^\top, \ldots,\allowbreak \bvec(x_{p+2})^\top))^\top, \pi_\vartheta(\varthetavec)\big)$: a nonlinear BCTM consisting of nonlinear interactions
 with  basis dimension of $10$  in $\avec$ and $\bvec$
\item Full MLT: a nonlinear MLT of the same type with Bernstein polynomials of order $10$, i.e.~with joint basis  $(\avec_{\text{Bs}}(y)^\top \otimes ( \bvec_{\text{Bs}}(x_1)^\top, \ldots, \bvec_{\text{Bs},10}(x_{p+2})^\top))^\top$ 
\item Oracle BAMLSS: a Gaussian location-scale BAMLSS based on model \eqref{sim_vcm}, i.e. $\eta_{\mu} = \beta_0 + x_2 \cdot f(x_1) + \sum_{k=0}^p f(x_{2+k})$ and $\eta_{\sigma^2}= \beta_0 + f(x_1)$  and
\item BAMLSS QR: a Bayesian semiparametric quantile regression specification with nonlinear effects of all explanatory variables
\end{itemize}
Further details on the specifications are given in Supp.~Tab.~C.7.
It is important to stress that  Lin.~BCTM/MLT and Oracle BAMLSS have in common that they are restricted by design to the true (Gaussian) distribution. In addition, the Oracle BAMLSS is the only model that is supplemented with the true predictor for the variance in all scenarios. Yet, despite being linear in the covariates on the scale of the transformation function, the Lin.~BCTM/MLT are nonlinear on the scale of the response. Since this information is in general not available, we also include the Full BCTM/MLT. 

\noindent\textbf{Performance measures}
As a first measure of performance, we computed the mean absolute deviation (MAD) of the estimates of $F_{Y|\mX=\xvec}(y)$ from the true probabilities over a grid of $y$, $x_1$ and $x_2$
$
\mathrm{MAD}(x_1,x_2) = \frac{1}{n}\sum_{i=1}^n |F_{Y|\mX=\xvec}(y_i) - \hat{F}_{Y|\mX=\xvec}(y_i)|$
based on 100 replications. 
Fig.~\ref{fig:sims_all} summarizes the empirical distributions of the minimum, median and maximum MAD for all models that provide estimates for the complete cCDF, i.e.~all but the BAMLSS QR. 
\begin{figure}[htbp]
\centering
      {\includegraphics[width=\textwidth]
   {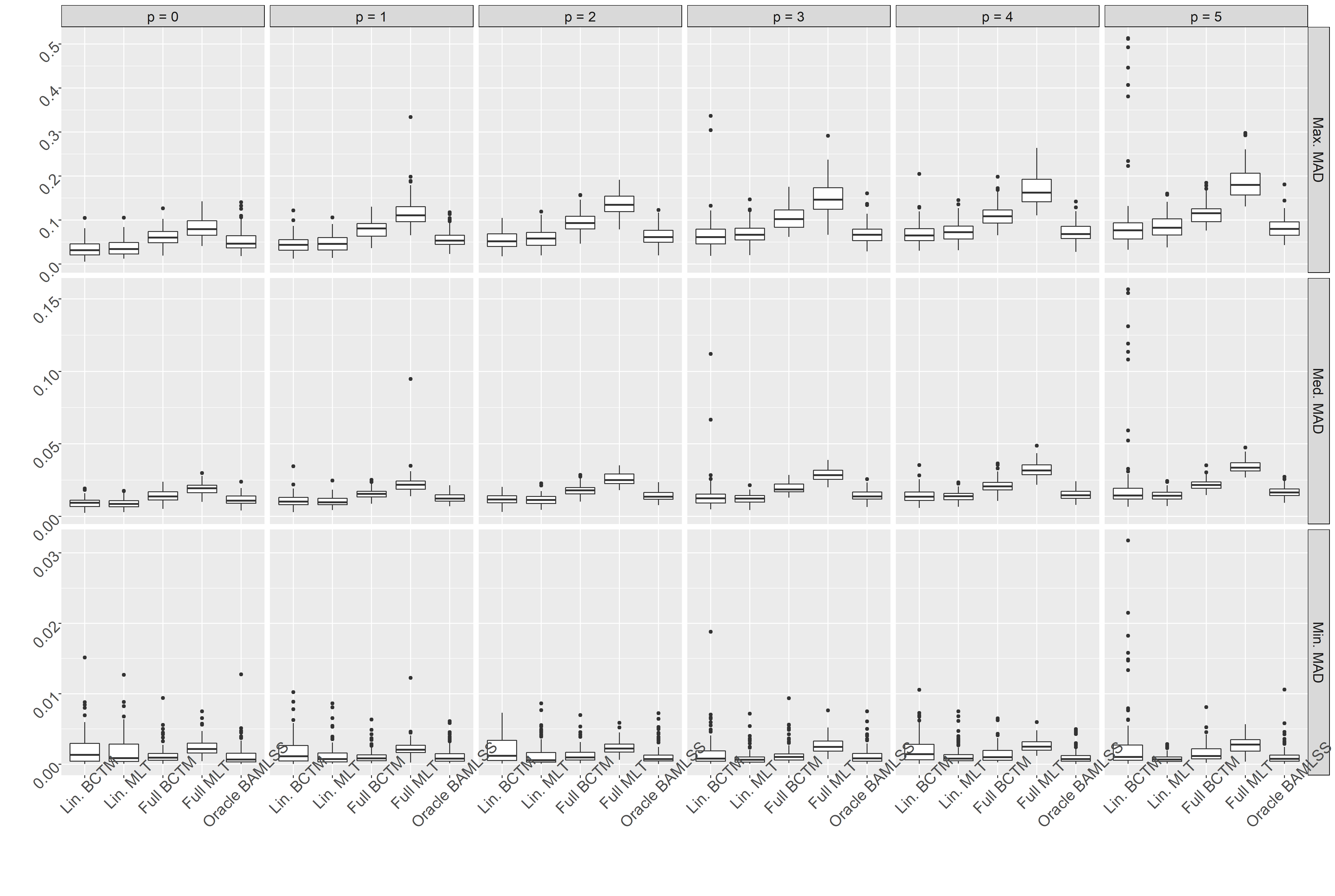}}\caption{Simulation 1. Minimum, median and maximum of the mean absolute deviation (MAD) between true and estimated predictive probabilities for the Lin.~BCTM/MLT, the Full BCTM/MLT and the BAMLSS  for the six scenarios with $p=0,\ldots,5$  noise covariates based on 100 replications each.}\label{fig:sims_all}
\end{figure}
As a second performance measure, we computed conditional quantiles of the fitted response distribution corresponding to a sequence of probabilities $\alpha$ via numerical inversion. Fig.~\ref{fig:quantile_devs} shows the deviations of these from their true counterparts together with similar results obtained via QR BAMLSS which was used as a benchmark.
\begin{figure}[htbp]
\centering
      {\includegraphics[width=\textwidth]
   {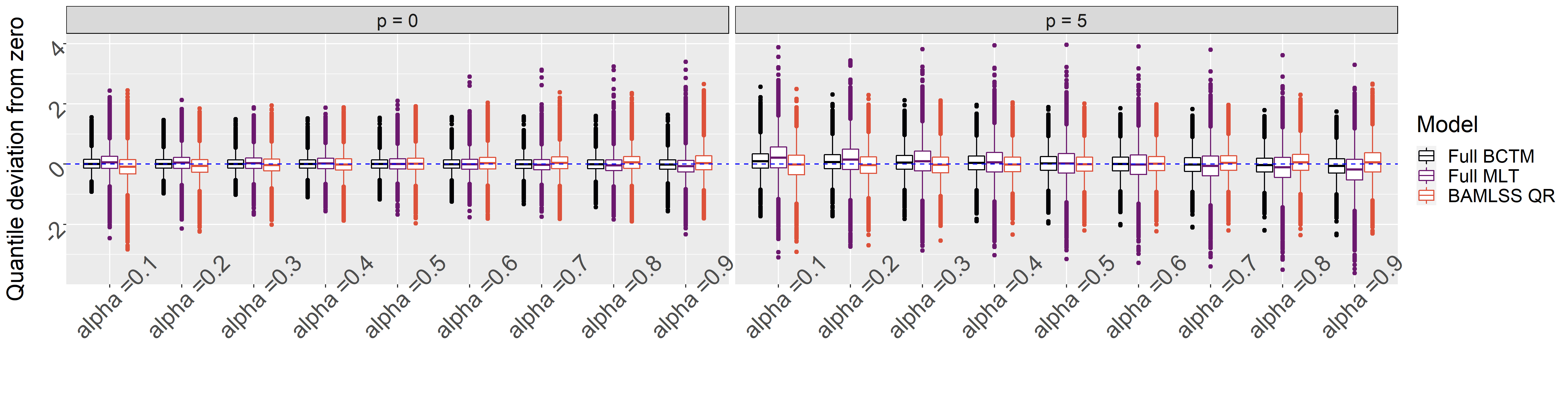}}\caption{\small Simulation 1. Bias in estimated quantiles for various quantile levels $\alpha$ for the six scenarios with $p=0,\ldots,5$  noise covariates based on 100 replications each. }\label{fig:quantile_devs}
\end{figure}
Third, as a measure of accuracy that concentrates on tail features of the distribution, Fig.~\ref{fig:quantile_scores} shows the quantile score function $\text{QS}_\alpha (F^{-1}(\alpha), y) = 2(\mathcal{I}(y < F^{-1} (\alpha)) - \alpha)(\alpha - y)$ at $\alpha=0.05$ and  $\alpha=0.95$,  where $\mathcal{I}(A)$ = 1 if $A$ is true, and zero
otherwise \citep{gneiting2011quantiles}. 
Last, to measure the overall forecast accuracy,  we plot the decomposition of the continuous ranked probability score \citep[CRPS;][]{laio2007verification} which can be written as 
$    \text{CRPS}(F, y) =  \int_0^1 \text{QS}_\alpha(F{-1}(\alpha), y) \mathrm{d}\alpha$
in Fig.~\ref{fig:crps}. Both, the QS and CRPS are based on the prediction grids used for the MAD and lower values suggest greater
accuracy.
 \begin{figure}[htbp]
\centering
      {\includegraphics[width=\textwidth]
   {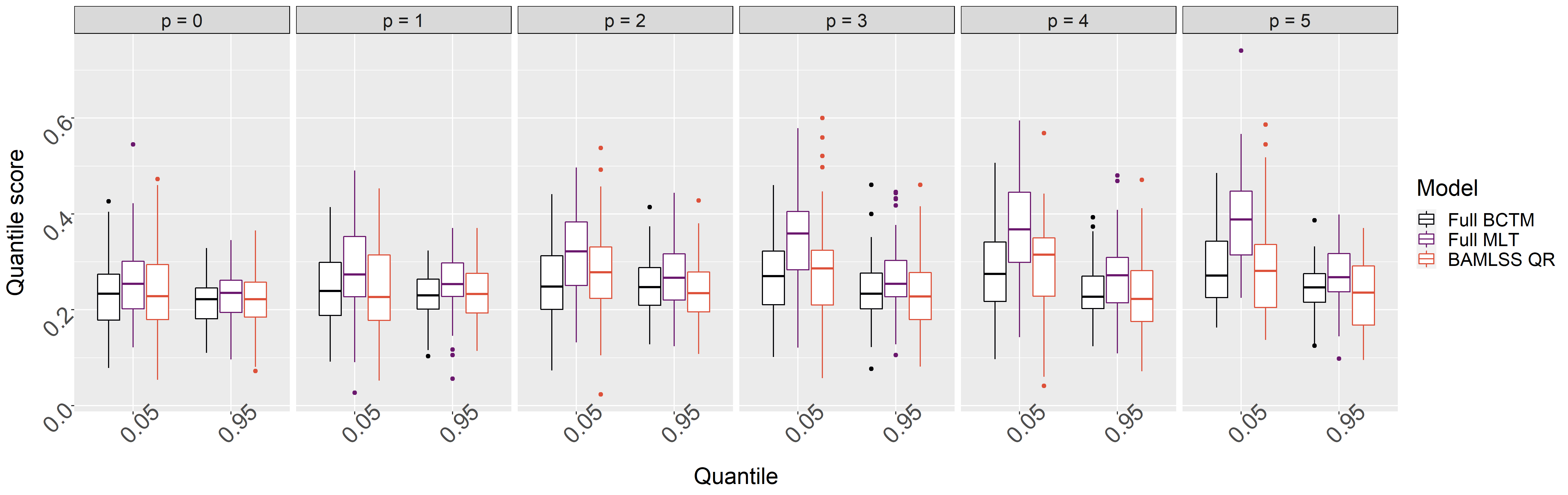}}\caption{\small Simulation 1. Estimated quantile scores QS($F^{-1}(\alpha), y)$ for $\alpha = 0.05$ and $0.95$ for the six scenarios with $p=0,\ldots,5$  noise covariates based on one test dataset each.}\label{fig:quantile_scores}
\end{figure}
\begin{figure}[htbp]
\centering
      {\includegraphics[width=\textwidth]
   {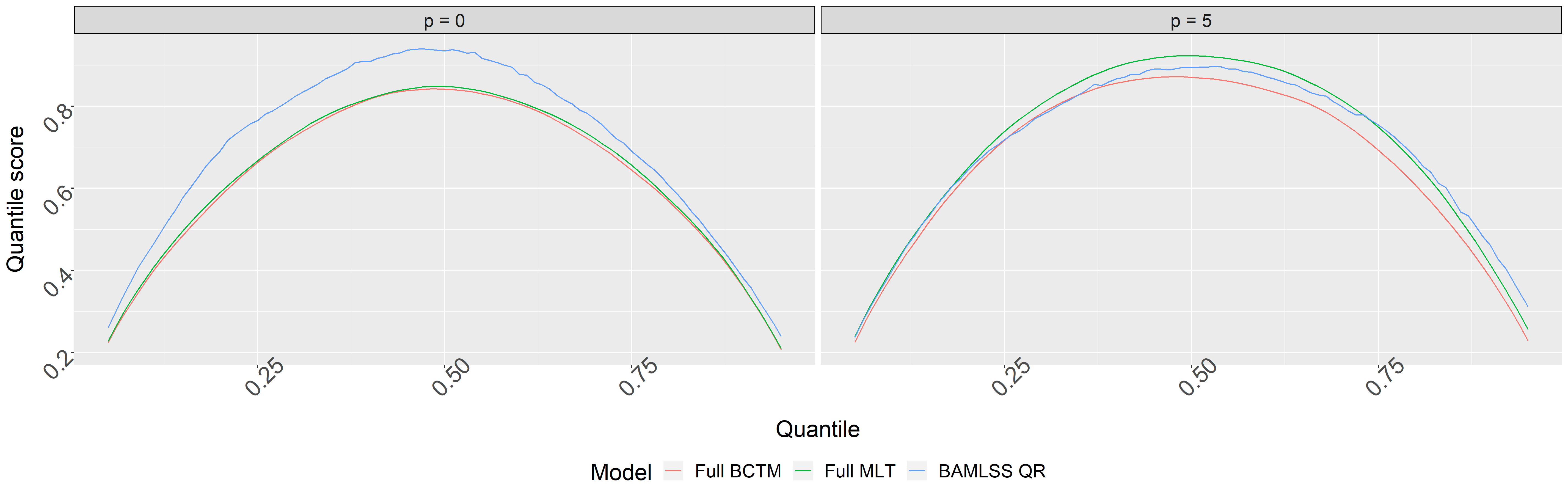}}\caption{\small Simulation 1. Estimation CRPS decomposition  for two scenarios with $p=0,5$  noise covariates based on one test dataset each.}\label{fig:crps}
   \end{figure}

\noindent\textbf{Results}
Lin.~BCTM yields MADs that are very close to those of the Oracle BAMLSS and also performs better than Lin.~MLT for all $p$. The Full BCTM/MLT have somewhat higher MADs, and MLT is again worse than BCTM when the number of noise variables is large. 
All methods do recover the true conditional quantiles well. Full BCTM/MLT are on par with BAMLSS and in particular with the BAMLSS QR which is specifically tailored to estimate conditional quantiles. Full BCTM performs best in terms of QS and CRPS. Specifically, Full BCTM and BAMLSS QR are similar and outperform Full MLT in terms of QS, while full BCTM is slightly better than Full MLT and Bayes QR worst according to the CRPS. In summary, BCTMs enable proper and reliable modelling of the complete cCDF and its quantiles, while avoiding restrictive assumptions on the shape of the  distribution. BCTMs therefore provide a strong competitor in situations where neither the exact predictor specification nor the type of the response distribution can be derived from a priori considerations.


\subsection{Coverage rates}\label{sec:sim2}
To compare BCTM and MLT from a different perspective, we consider empirical coverage rates of pointwise $95\%$ credible/confidence intervals in a simulation setting that concentrates on the estimation of nonlinear covariate effects.

\noindent\textbf{Simulation design} For datasets of size $n=100$ (for $n=500$, see Part C of the Supplement), we  generate four i.i.d.~covariates  via $x_p \sim U[-2,2]$, $p=1,\ldots,4$ and assume four nonlinear test functions $f_1(x) = x$, $f_2(x) = x + \frac{(2x-2)^2}{5.5}$, $f_3(x) = -x + \pi \mathrm{sin}(\pi x)$ and $f_4(x) = 0.5x + 15\phi(2(x-0.2))-\phi(x+0.4)$. The responses are then generated as
$y = f_1(x_1) + f_2(x_2) + f_3(x_3) + f_4(x_4) + \epsilon$, $\epsilon \sim \mathrm{N}(0,1).$

\noindent\textbf{Benchmark methods}  We  fit  linear Gaussian CTMS, i.e.
\begin{itemize}
\item $\big(\varthetavec,\Phi,((1, y)^\top , (\bvec(x_1)^\top, \ldots, \bvec(x_{4})^\top))^\top, \pi_\vartheta(\cdot)\big)$: a linear BCTM with $20$ B-spline basis functions in $\bvec$ and nonlinear shift effects 
\item a linear MLT with nonlinear shifts of the same type specified in terms of Bernstein polynomials of order $10$, i.e.~with joint basis
$((1, y)^\top, ( \bvec_{\text{Bs},10}(x_1)^\top, \ldots,\allowbreak \bvec_{\text{Bs},10}(x_{4})^\top))^\top$
\end{itemize}

\noindent\textbf{Performance measure}
Empirical coverage rates of pointwise 95\% credible/confidence intervals based on 100 replications are shown in Fig.~\ref{fig:cov}. For the BCTM, these can readily be computed from the MCMC output, while for the MLT an additional computationally costly parametric bootstrap \citep{hothorn2017package} has to be run.

\begin{figure}
\centering
      {\includegraphics[width=0.99\textwidth]
   {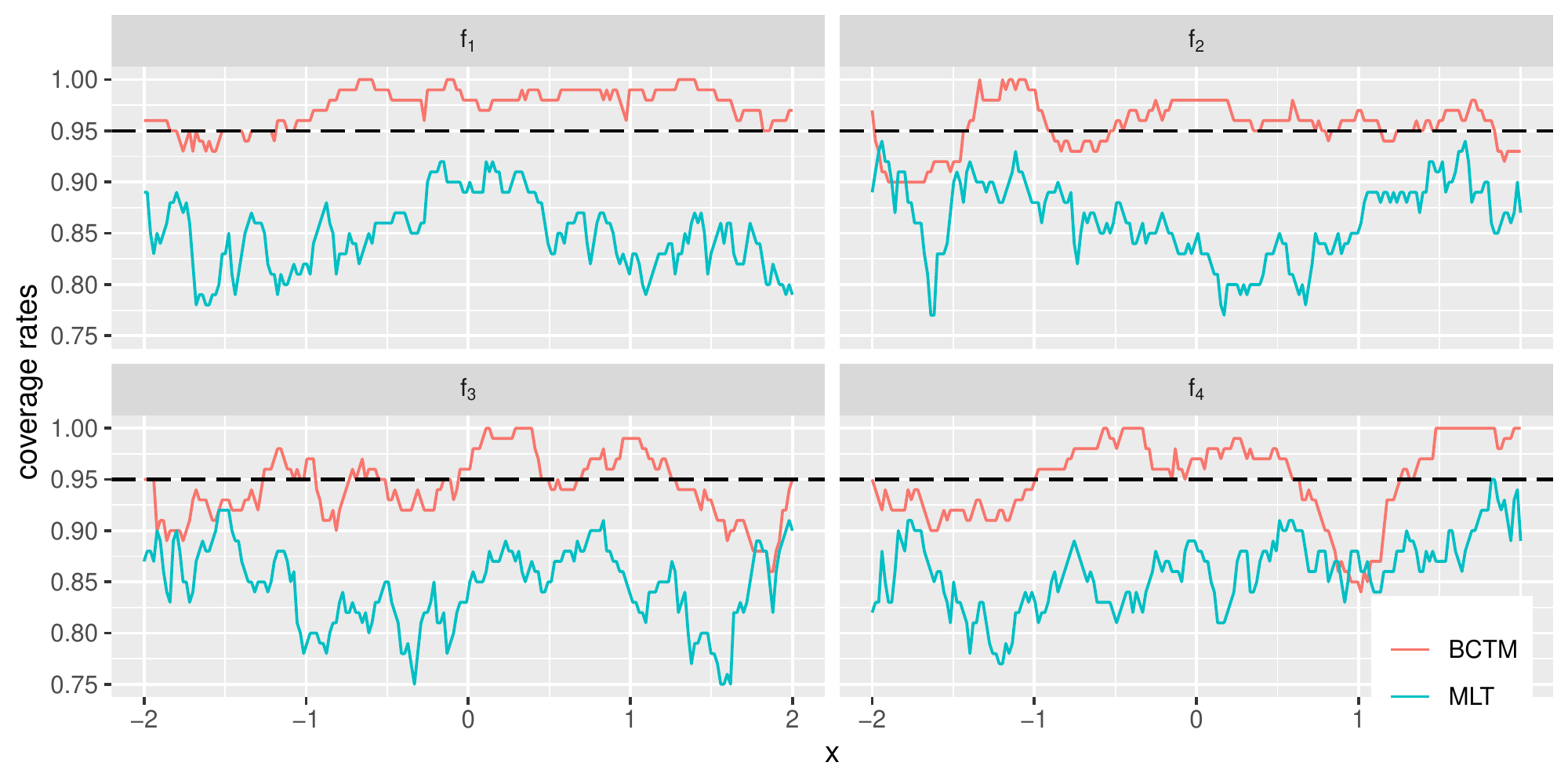}}\caption{Simulation 2. Coverage rates of pointwise 95\% credible/confidence intervals of BCTM (red) and MLT (blue) for $f(x_1), \ldots, f(x_4)$ evaluated on equally spaced grids within the range of $x_1, \ldots, x_4$. The nominal 95\% level is shown as a dashed line.}\label{fig:cov}
\end{figure}

\noindent\textbf{Results}
Fig.~\ref{fig:cov} confirms the validity of the credible intervals provided by the BCTM as the desired 95\% level is mostly maintained which is not the case for  MLT. Corresponding effect estimates are shown in the Supplement, Part C.2.

\section{Applications}\label{sec:apps}

We illustrate the versatility of BCTMs in three applications that differ with respect to the chosen reference distribution and transformation effect types. The first one highlights the applicability of the BCTM in the presence of highly skewed data (Sec.~\ref{sec:fram}). The remaining two are BCTMs for (right-censored) survival data in form of a (non-)proportional hazards (NPH) model with random or spatial frailties
(Sec.~\ref{sec:lung}) and a partial (non-)proportional odds (PO) model (Supplement, Part B.3). While not shown here, it is straightforward to derive additional quantities of interest such as quantile curves or odds by transformations of $\hat{h}$. Throughout this section, we use cubic B-spline bases of dimension $D_1=20$ for $\avec$ for univariate splines and dimension $D_1=10$ for bivariate splines. We adapt the number of basis functions for the covariate effects in $\bvec(\xvec)$ according to subject-matter. As a default, we use IG priors for the smoothing variances unless explicitly stated otherwise.

\subsection{Framingham heart study}\label{sec:fram}
The Framingham Heart Study dataset of  \citep{zhang2001linear} contains the cholesterol levels ($\mathit{cholst}$) of $200$ patients at three to six different measurement points over the course of up to $10$ years along the current $\mathit{age}$ and $\mathit{sex}$ of each individual. There are $n=1044$ observations in total.

We fitted various BCTM specifications that differ in their specific form of the transformation function and the chosen hyperprior. All of them are based on
\begin{itemize}
\item $\big( \varthetavec,\Phi,\allowbreak (\avec(y)^\top \otimes\allowbreak (1, \mathit{age}),\allowbreak (\mathit{year},\allowbreak \mathit{sex}) )^\top, \pi_\vartheta(\varthetavec)\big)$: a response-varying VCM for $\mathit{age}$ and intercept in $\avec(y)$, leading to 
\begin{align*}
\mathbb{P}(\mathit{cholst} \leq y | \xvec) &= \Phi \left( h_1(y) + h_2(y |\mathit{age})+ h_3(y |\mathit{year})+ h_4(y |\mathit{sex})\right)  \\
&= \Phi\left( \avec(y)^\top \gammavec_1 + \mathit{age}\cdot\bm{a}(y)^\top \gammavec_2  + \mathit{year} \cdot\gamma_3 + \mathit{sex}\cdot \gamma_4  \right).
\end{align*}
\item $\big( \varthetavec,\Phi, ( \avec(y)^\top \otimes  \bvec(\mathrm{age})^\top, (\mathit{year}, \mathit{sex}))^\top,\allowbreak \pi_\vartheta(\varthetavec)\big)$: a full BCTM for $\mathit{age}$, where $\avec(y)$ and $\bvec(\mathit{age})$ contain an intercept, the tensor product is centered around zero and $\bvec(\mathit{age})$ consists of a 10-dimensional B-splines basis leading to 
\begin{align*}
\mathbb{P}(\mathit{cholst} \leq y | \xvec) &= \Phi \left(h_1(y |\mathit{age})+ h_2(y |\mathit{year})+ h_3(y |\mathit{sex})\right)  \\
&= \Phi\left( (\avec(y)^\top \otimes \bvec_1(\mathit{age})^\top)^\top \gammavec_1 + \mathit{year} \cdot\gamma_2 + \mathit{sex}\cdot \gamma_3\right).
\end{align*}
\end{itemize}
The default uses IG priors for both models but variants also employ the SD priors as competitors. Furthermore, we considered both models augmented by patient-specific i.i.d.\ random effects. We benchmark the BCTMs against the Bayesian GAMLSS of \citet{MicKleKne2018} based on a skew-t distribution for the responses and predictors $\eta_k$ for all $K=4$ distributional parameters (location, scale, degrees of freedom, and skewness)  given by
\begin{align*}
    \eta_k  = \beta_{k,0}  + x_{\text{sex}} \beta_{k,\text{sex}}  + x_{\text{age}} \beta_{k,\text{age}}  + x_{\text{year}} \beta_{k,\text{year}}, \quad k=1,\ldots,K.
\end{align*}
A variant thereof also contains the patient-specific i.i.d.\ random effects, see Supp.~Tab.~B.1 for full details on all model specifications.

\noindent\textbf{Model selection}
All models are compared to each other using the DIC, WAIC and log-scores in Supp.~Tab.~B.2. The log-scores are based on 10-fold CV. Overall, all criteria  favour the tensor product spline BCTM with random effect over the GAMLSS specifically tailored to skewed responses. In general, the inclusion of random effects seems essential for obtaining realistic models while only smaller improvements result from the consideration of tensor products rather than VCMs. Replacing IG priors with SD priors does only yield  a small performance improvement for the models without random effects. However, applying the SD prior results in noticeabe improvements in effectiveness and stability of the sampler, see Supp.~Tab.~B.3, B.4.

\noindent\textbf{Results}
\begin{figure}
\centering
   \hspace*{-1.5cm}\includegraphics[width=0.7\textwidth]{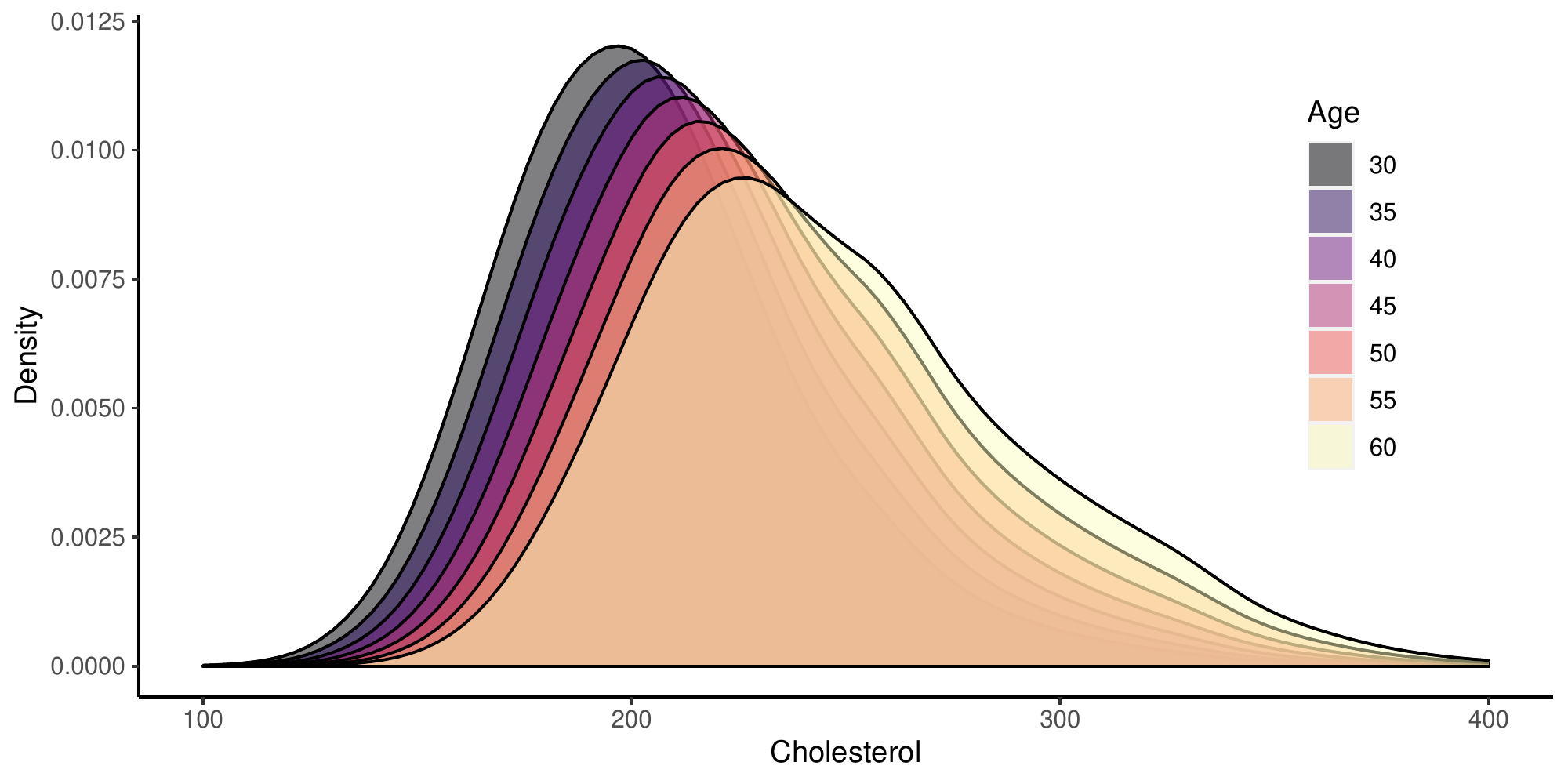}\caption{\small Framingham heart study. VCM with random effects set to zero. Shown are the estimated conditional cholesterol densities for different patient ages in the middle of the study.}\label{fig:fram3}
\end{figure}
Fig.~\ref{fig:fram3} shows estimated conditional densities for different patient ages in the middle of the study ($\mathit{year} = 4$) for the VCM with random effects set to zero. With increasing age, the mode of the conditional distribution is shifted towards higher cholesterol values. Moreover, the estimated conditional densities become more and more right-skewed, indicating the presence of more extreme cholesterol values. On the other hand, the left tail does not change as  much. Fig.~\ref{fig:fram4} shows an estimated heat map that was obtained from the tensor product model assuming nonlinear covariate effects. While the general result is similar to the one in Fig.~\ref{fig:fram3}, we see a reversal of the trend towards right-skewness at $\mathit{age}\approx 55$.
\begin{figure}
\centering
     \hspace*{-1.5cm}\includegraphics[width=0.7\textwidth]{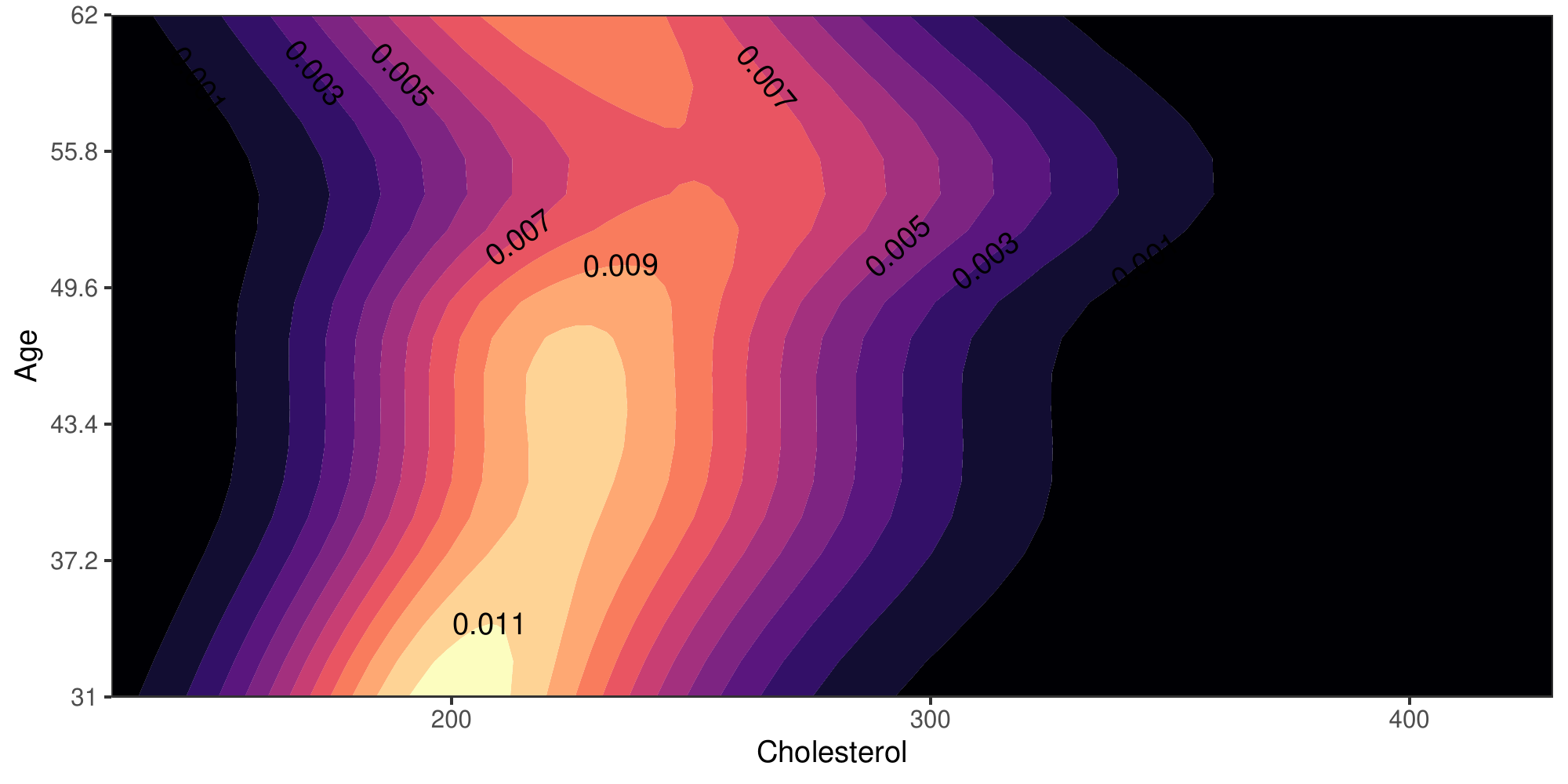}\caption{\small Framingham heart study. Tensor product BCTM with random effects set to zero. Shown are the estimated predictive densities of cholesterol for different patient ages at $\mathit{year}=4$.}\label{fig:fram4}
   \end{figure}

\subsection{Leukemia survival}\label{sec:lung}

The second analysis considers acute myeloid leukemia survival of $n=1043$ patients \citep{henderson2002modeling}  with 184 survival times being right-censored. In addition to the impact of the prognostic factors $\mathit{age}$, $\mathit{sex}$, white blood cell count ($\mathit{wbc}$) and the Townsend score ($\mathit{tpi}$), indicating less affluent residential areas for higher values, we investigate spatial patterns in form of the  indicator $\bvec(s)$ for $24$ administrative regions in North West England. 

In a first step, we fitted  linear PH models $\bctm{F_{\text{MEV}}}{(\avec(t)^\top \otimes 1)^\top, \xvec^\top))^\top}$ both without and with random effect for the administrative districts, and where $F_{\text{MEV}}$ denotes the CDF of the minimum extreme value distribution. 
Next, to account for spatial dependencies through a GMRF, we fit a $\bctm{F_{\text{MEV}}}{(\avec(t)^\top \otimes 1)^\top,(1 \otimes (\bvec(s)^\top, \xvec^\top))^\top}$  resulting in the spatial proportional hazards model
\begin{align*}
P(\mathit{Time} \leq t |\xvec) &= F_{\text{MEV}}(h_1(t)  + h_2(s) + h_3(\xvec))= F_{\text{MEV}}(\avec(t)^\top \gammavec_1  +\xvec^\top \gammavec_2 + \bvec(s)^\top \gammavec_3).
\end{align*}
As a last expansion, we fitted  non-spatial and spatial NPH models for age, i.e. ~$\bctm{F_{\text{MEV}}}{(\avec(t)^\top \otimes \bvec(age))^\top, \xvec^\top))^\top}$ and $\bctm{F_{\text{MEV}}}{(\avec(t)^\top \otimes \bvec(age))^\top,(1 \otimes (\bvec(s)^\top, \xvec^\top))^\top}$, respectively.  All model specifications are compared via the WAIC in Tab.~\ref{tab:leuk:waic} and respective posterior mean estimates of the log-negative harzard ratios are presented in Tab.~\ref{tab:leuk2}. 
\begin{table}
\centering
\begin{tabular}{l|cccccc}
  \hline  \hline
Model & bctm  & bctm\_re & bctm\_spat & bctm\_nph & bctm\_nph\_re  & bctm\_nph\_spat \\  \hline
WAIC &  12425 & 12424 & \textbf{12421} & 12785 & 12768 & 12765\\
  \hline  \hline
\end{tabular}\caption{\small  Leukemia survial. Shown are the WAIC for all BCTM models, i.e.~PH model without random effects (bctm/mlt), with random effects (bctm\_re/mlt\_re), the spatial PH model (bctm\_spat) and the NPH models without and with spatial effect (bctm\_nph/bctm\_nph\_spat).} \label{tab:leuk:waic}
\end{table}
\begin{table}
\centering
\begin{tabular}{l|cccc}
  \hline  \hline
 Model & $\mathit{tpi}$ & $\mathit{age}$ & $\mathit{sex}$ & $\mathit{wbc}$ \\
    \hline
bctm & 0.112 & 0.556 & 0.027 & 0.203 \\
mlt & 0.102 & 0.552 & 0.035 & 0.204 \\
bctm\_re & 0.115 & 0.577 & 0.029 & 0.206  \\
mlt\_re & 0.120 & 0.605 & 0.033 & 0.207   \\
bctm\_spat & 0.114 & 0.590 & 0.035 & 0.208  \\
bctm\_nph & 0.111 & - & 0.028 & 0.198   \\
mlt\_nph & 0.141 & - & 0.005 & 0.190 \\
bctm\_nph\_re & 0.085 & - & 0.043 & 0.202   \\
bctm\_nph\_spat & 0.110 & - & 0.036 & 0.201 \\
   \hline  \hline
\end{tabular}\caption{\small  Leukemia survial. Estimated posterior of the log negative hazard ratios. The rows correspond to the PH model without random effects (bctm/mlt), with random effects (bctm\_re/mlt\_re), the spatial PH model (bctm\_spat) and the NPH models without and with random/spatial effect (bctm\_nph/mlt\_nph/bctm\_nph\_re/bctm\_nph\_spat).} \label{tab:leuk2}
\end{table}
As a baseline check,  Tab.~\ref{tab:leuk2} also includes esimates from the  MLT for which however only the PH model with and without random effects for the districts and the NPH model (without random effects) can be estimated using the R packages \textbf{tram} \citep{tram} and \textbf{tramME} \citep{tamasi2022tramme}. Details on the specifications can be found in Supp.~Part B.2. 
 Estimates of these models are similar to the ones of the corresponding BCTM.

Since overall the WAIC favours the spatial PH model  (bctm\_spat), Fig.~\ref{fig:leuk_plot1} shows the resulting estimated conditional survivor functions defined as
\begin{align*}
S(t|\mX=\xvec) = P(Time>t|\mX=\xvec) = 1-P(Time \leq t |\mX=\xvec)
\end{align*}
for different Townsend scores (Panel A) accompanied by a depiction of the estimated spatial effect (Panel B). It confirms the findings of Tab.~\ref{tab:leuk2}, indicating that affluency (lower $tpi$) is associated with higher survival at all times. The spatial effect of association to a district is associated with lower survival for higher values, and is therefore hinting on a lower mortality cluster in the northwest and on a high-risk ``belt" running from northeast to southwest. 
\begin{figure}
\centering
   \hspace*{-1.5cm}\includegraphics[width=0.8\textwidth]{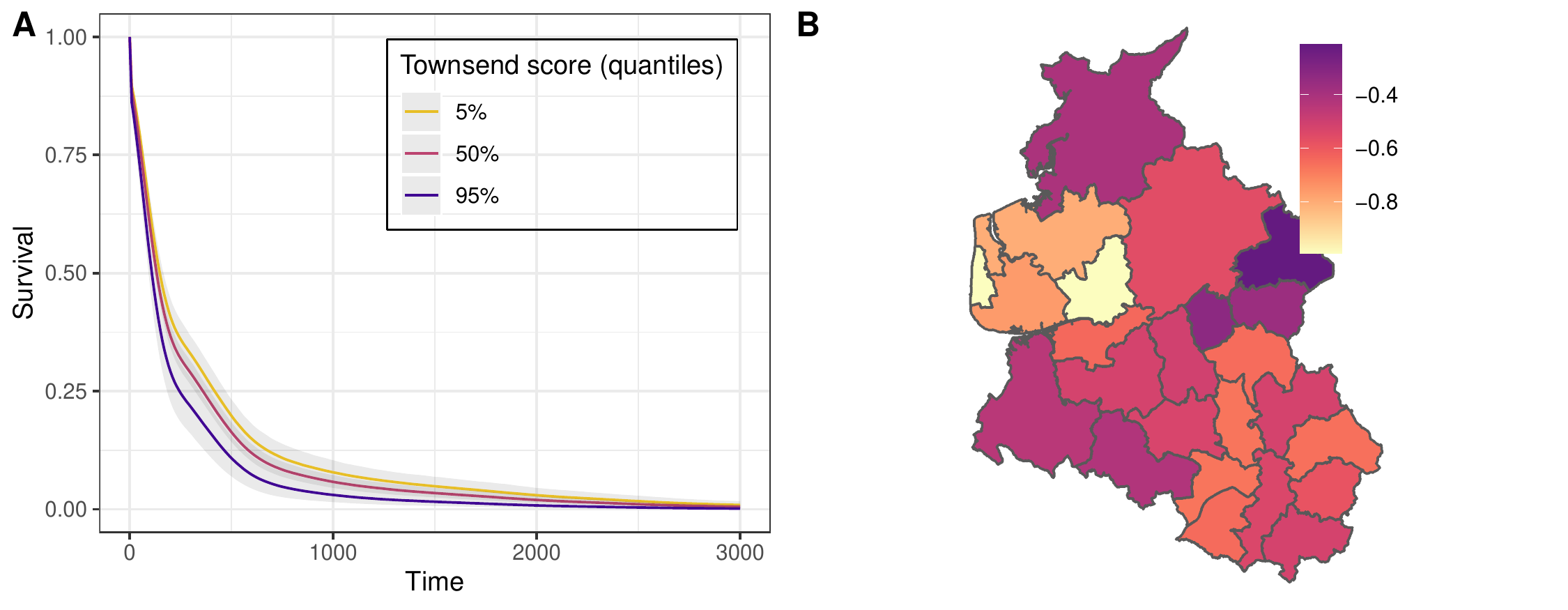}\caption{\small Leukemia survival. (A) Estimated survivor functions with $95\%$ credible interval bands for different quantiles of Townsend score (the lower, the more affluent) for females with $\mathit{age}=49$ and $\mathit{wbc}=38.6$(mean) for the first region. (B) Posterior mean spatial frailties (the larger, the higher mean mortality). Results are based on the spatial PH model $\bctm{F_{\text{MEV}}}{(\avec(t)^\top \otimes 1)^\top, (1 \otimes (\bvec(s)^\top, \xvec^\top))^\top}$.}\label{fig:leuk_plot1}
\end{figure}
Finally, Tab.~B.5 of the Supplement shows the estimated posterior means of the log-negative hazard ratios (collected in $\gammavec_2$), medians and credible intervals of the same model. Similar to the results in \cite{spbayessurv}, we find that $\mathit{tpi}$, $\mathit{age}$ and $\mathit{wbc}$ are significant risk factors for surviving leukemia. 


\section{Summary and discussion}\label{sec:summary}
Our Bayesian treatment of CTMs based on MCMC is attractive for an assortment of reasons. Sampling-based inference provides posterior samples for coefficients of the conditional transformation function which can be transferred to samples of the cCDF, but also to samples of any quantity of interest that relies on the cCDF. It is straightforward, for example, to obtain point estimates and credible intervals without having to dive into asymptotics. Furthermore, the Bayesian paradigm offers a natural way to impose smoothness penalties on the crucial nonlinear transformation functions, a feature likelihood-based competing methods (such as the MLT) are lacking in the software. In this way, the BCTM is able to resemble and even expand upon models ranging from simple to complex in settings with continuous, discrete and censored data without requiring strong assumptions.

In flat regions of the curve however, the reparameterization of the basis coefficients is such that the untransformed $\betavec$ may be weakly identified, resulting in potentially inefficient sample runs. This issue is explicitlty tackled in \cite{mckay2011variable} who use a spike and slab prior directly on the basis coefficients that resulted in zero coefficients for flat regions, but only allows nonlinear monotonic covariate effects in a Gaussian setting. In our approach, we considered scale-dependent hyperpriors to counter mixing problems, but expanding such investigations to a wider scope with interactions is certainly an interesting field for future research.


Instead of specifying $F_Z$ a priori, it could also be interesting to include it as an additional free parameter in the estimation process. Among others, \cite{linton2008,politis2013} describe the situation of a ``model free'' paradigm where the reference distribution is estimated without invoking any predetermined model (but by fully parameterizing the transformation function). This restriction is alleviated by the fact that in theory, arbitrarily complex distributions can be transformed to a basic reference distribution as long as the transformation function $h$ is flexible enough. Abandoning the additivity assumption in $h(y|\xvec)$ in favor of e.g.~tensor spline interactions however, can become computationally costly and numerically unstable due to the high dimensionality of the resulting basis, but can also be tackled by estimating the reference distribution in conjunction with a simpler structure in the transformation function.  The idea of a free $F_Z$ was investigated in a Bayesian setting by \cite{walker1999bayesian} and \cite{mallick2003bayesian} for example who use a P\'{o}lya tree prior for a series of (unconditional) semiparametric transformation models. Embedding it in the BCTM framework would result in a potentially very powerful addition to model flexibility.

Finally, our occupation with the BCTM for this article and beyond assured us that it represents an alluring modern competitor in the race to capture more and more aspects of the response distribution beyond the mean.

\bigskip
\begin{center}
{\large\bf SUPPLEMENTARY MATERIAL}
\end{center}
\spacingset{1.0}
\begin{description}

\item[supplement.pdf] This supplement contains the proof of Theorem~\ref{theo1} and further results for simulations and applications.
\item[Code]  to reproduce the results from the applications is available on request. 
\end{description}

\spacingset{1.1}
\bibliography{lit}
\end{document}